\journal{TBA}
\newtheorem{theorem}{Theorem}[section]
\newtheorem{lemma}[theorem]{Lemma}
\newtheorem{proposition}[theorem]{Proposition}
\theoremstyle{definition}
\newtheorem{definition}[theorem]{Definition}
\theoremstyle{remark}
\numberwithin{equation}{section}
\newcommand{\diff}{\mathrm{d}}
\DeclareMathOperator*{\argmax}{arg\,max}
\title{\textbf{
Nash Equilibrium between Brokers and Traders
}\\
Forthcoming in Finance and Stochastics
}
\author[label1,label2]{\'{A}lvaro Cartea}
\address[label1]{Mathematical Institute, University of Oxford}
\address[label2]{Oxford-Man Institute of Quantitative Finance}
\ead{alvaro.cartea@maths.ox.ac.uk}
\address[label4]{Department of Statistical Sciences, University of Toronto}
\author[label4,label2]{Sebastian Jaimungal}
\ead{sebastian.jaimungal@utoronto.ca}
\author[label1,label2]{Leandro S\'{a}nchez-Betancourt}
\ead{leandro.sanchez-betancourt@kcl.ac.uk}
\begin{document}

\newcommand{\eps}{\varepsilon}
\newcommand{\la}{\left \langle}
\newcommand{\ra}{\right\rangle}
\newcommand{\cb}[1]{{\color{blue} #1}}
\newcommand{\norm}[1]{\left\lVert #1 \right\rVert}
\newcommand{\bae}{\begin{equation}\begin{aligned}}
\newcommand{\eae}{\end{aligned}\end{equation}}
\newcommand{\beq}{\begin{equation}}
\newcommand{\eeq}{\end{equation}}
\newcommand{\N}{\mathbb{N}}
\newcommand{\R}{\mathbb{R}}
\newcommand{\E}{\mathbb{E}}
\newcommand{\Pb}{\mathbb{P}}
\newcommand{\Lb}{\mathbb{L}}
\renewcommand{\H}{{\mathbb{H}^2}}

\newcommand{\mfT}{{\mathfrak{T}}}
\newcommand{\mfA}{{\mathfrak{A}}}
\newcommand{\mfB}{{\mathfrak{B}}}
\newcommand{\mfO}{{\mathfrak{O}}}
\newcommand{\tT}{{t\in\mfT}}
\newcommand{\mcA}{{\mathcal{A}}}
\newcommand{\mcC}{{\mathcal{C}}}
\newcommand{\mcF}{{\mathcal{F}}}
\newcommand{\mcB}{{\mathcal{B}}}
\newcommand{\mcH}{{\mathcal{H}}}

\newcommand{\transB}{\mathfrak{t}}
\newcommand{\decayB}{\mathfrak{p}}
\newcommand{\instantB}{\mathfrak{h}}

\newcommand{\tempB}{a}
\newcommand{\tempI}{b}
\newcommand{\tempU}{c}
\newcommand{\permB}{k}
\newcommand{\termpB}{\phi}
\newcommand{\termpI}{\psi}
\newcommand{\runnB}{r^B}
\newcommand{\runnI}{r^I}

\newcommand{\nuI}{\eta}
\newcommand{\nuB}{\nu}
\newcommand{\nuU}{\xi}
\newcommand{\nuIbis}{\kappa}
\newcommand{\nuBbis}{\zeta}
\newcommand{\nuIdir}{\mathfrak n}
\newcommand{\nuBdir}{\mathfrak v}

\newcommand{\nuBstar}{\nu^*}
\newcommand{\nuBstartilde}{\tilde\nu^*}
\newcommand{\nuIstar}{\eta^*}

\newcommand{\MI}[1][I]{M^{#1}}
\newcommand{\MB}[1][B]{M^{#1}}
\newcommand{\MBa}[1][B]{\tilde{N}^{#1}}
\newcommand{\MBb}[1][B]{\tilde{M}^{#1}}
\newcommand{\MZ}[1][Z]{M^{#1}}

\newcommand{\QI}[1][I]{Q^{#1}}
\newcommand{\QB}[1][B]{Q^{#1}}
\newcommand{\QU}[1][U]{Q^{#1}}
\newcommand{\XI}[1][I]{X^{#1}}
\newcommand{\XB}[1][B]{X^{#1}}
\newcommand{\XU}[1][U]{X^{#1}}

\newcommand{\mcFB}{\mcF}%^B
\newcommand{\mcFI}{\mcF}%^I

\newcommand{\vfB}{J^{B*}}
\newcommand{\vfI}{J^{I*}}
\newcommand{\pcB}{J^B}
\newcommand{\pcI}{J^I}
\newcommand{\mcAB}{\H}
\newcommand{\mcAI}{\H}

\newcommand{\g}{g}
\newcommand{\gI}{g^{I}}
\newcommand{\gB}{g^{B}}
\newcommand{\gZ}{g^{Z}}
\newcommand{\gY}{g^{Y}}
\newcommand{\h}{h}
\newcommand{\hI}{h^{I}}
\newcommand{\hB}{h^{B}}
\newcommand{\hZ}{h^{Z}}
\newcommand{\hY}{h^{Y}}
\newcommand{\f}{f}
\newcommand{\fI}{f^{I}}
\newcommand{\fB}{f^{B}}
\newcommand{\fZ}{f^{Z}}
\newcommand{\fY}{f^{Y}}

\renewcommand{\S}{\mathbb{S}}
\newcommand{\spaceSSS}{\mathbb{S}^{2,3}}
\newcommand{\spaceHHH}{\mathbb{H}^{2,3}}

\newcommand{\spaceK}{\mathbb{K}}
\newcommand{\spaceKsub}{\mathbb{J}}

\newcommand{\fbsdeX}{\mathcal{X}}
\newcommand{\fbsdeY}{\mathcal{Y}}
\newcommand{\fbsdeM}{\mathcal{M}}
\newcommand{\fbsdeA}{A}
\newcommand{\fbsdeAhat}{\hat{A}}
\newcommand{\fbsdeB}{B}
\newcommand{\fbsdeBhat}{\hat{B}}
\newcommand{\fbsdeb}{b}
\newcommand{\fbsdebhat}{\hat{b}}
\newcommand{\fbsdesigma}{\sigma}
\newcommand{\fbsdeG}{G}

\newcommand{\riccatiP}{\mathcal{P}}

\newcommand{\seb}[1]{{\todo[fancyline,backgroundcolor=red!20!white]{\begin{spacing}{0.5}{\tiny #1}\end{spacing}}}}
\newcommand{\leandro}[1]{{\todo[fancyline,backgroundcolor=blue!20!white]{\begin{spacing}{0.5}{\tiny #1}\end{spacing}}}}

\newcommand\leo{ \color{black}}

\newcommand\new{ \color{black}}

\DeclarePairedDelimiter\abs{\lvert}{\rvert}%

\begin{abstract}
We study the perfect information Nash equilibrium between a broker and her clients --- an informed trader and an uniformed trader. In our model, the broker trades in the lit exchange where trades have instantaneous and transient price impact with exponential resilience, while both clients  trade with the broker. The informed trader and the broker maximise expected wealth subject to inventory penalties, while the uninformed trader is not strategic and sends the broker random buy and sell orders. We characterise the Nash equilibrium of the trading strategies with the solution to a coupled system of forward-backward stochastic differential equations (FBSDEs). We solve this system explicitly and  study the effect of information, profitability, and inventory control in the  trading strategies of the broker and the informed trader.
\end{abstract}

\maketitle

\section{Introduction}

This paper characterises the Nash equilibrium  between a broker and her clients in an over-the-counter (OTC) market. The problem formulation is as follows.  The broker provides liquidity to both an uninformed trader and an informed trader, and the broker also trades in a lit exchange where her buy and sell market orders have transient and instantaneous price impact. The trading flow of the uniformed trader is an exogenous process that mean-reverts around zero. The informed trader knows the stochastic drift of the asset and trades strategically. On the other hand, the broker knows the identity of her clients and uses this information to devise an internalisation and externalisation strategy of her inventory.

The setup of the interaction between the traders and the broker is similar to that in \cite{cartea2022broker}. Specifically, the broker and the informed trader maximise expected wealth from their trading activities, while penalising inventory holdings. For the broker, this penalty protects her strategy from inventory risk, in particular toxic inventory. Similarly, for the informed trader, the inventory penalty controls how much inventory risk he is willing to bear throughout the trading horizon. In  \cite{cartea2022broker}, the broker's trades in the lit market have a permanent impact on prices, whereas here the broker's trades have both instantaneous and transient impact (see e.g., \cite{neuman2022optimal}), where the latter is modelled with exponential kernels as in \cite{obizhaeva2013optimal}.

In our model, the broker controls her speed of trading  $\nuB$ in the lit exchange and the informed trader controls the speed  $\nuI$  at which he trades with the broker. The performance criteria of the broker  and of the informed trader are $\pcB(\nuB,\nuI)$ and  $\pcI(\nuB,\nuI)$, respectively. We show that for a large set of admissible controls:  (i) for fixed trading speed $\nuB$ of the broker, the map $\nuI\to\pcI(\nuB,\nuI)$ is strictly concave, and (ii) for fixed trading speed $\nuI$ of the informed trader, the map  $\nuB\to\pcB(\nuB,\nuI)$ is strictly concave. We use this result and the G\^ateaux derivative of the functionals $\pcI,\pcB$ to characterise the best responses of the broker and the informed trader.  Then, we show that the Nash equilibrium of the strategies of the broker and of the informed trader is given by the solution to a linear forward-backward stochastic differential equation (FBSDE). We  show that there is a unique solution to the FBSDE that characterises the Nash equilibrium if and only if there is a solution to a matrix Riccati differential equation. We show existence and uniqueness of the matrix Riccati differential equation under some restrictions of model parameters. 
We also present a proof of existence and uniqueness at the level of the FBSDE with an explicit bound on the time horizon.
Finally, we use the closed-form Nash equilibrium strategies to perform simulations and showcase the performance of the trading strategies.
We compare the performance of the strategies in the Nash equilibrium with the strategies in the two-stage optimisation of \cite{cartea2022broker} and derive insights.

There are a number of studies that focus on the equilibrium between takers and makers of liquidity in financial markets. Earlier works include those of \cite{grossman1980impossibility}, \cite{kyle1985continuous},  \cite{kyle1989informed}, and \cite{o1998market}.  \cite{grossman1988liquidity} study the liquidity of a traded asset as a result of the demand and supply of immediacy in a market where liquidity providers charge a premium for intermediating trades between two consecutive trading periods. Recently, \cite{bank2021liquidity} extend the Grossman s-Miller model to a continuous-time framework where a risky asset  is traded in two markets, a lit exchange,  and an OTC dealer market.   Similar to Bank et al., our  work considers a broker  who decides how to internalise and externalise trades. While in their model, ``pure internalisers only trade in the dealer market, whereas externalisation corresponds to offsetting trades in the open market'', see also \cite{butz2019internalisation}, in our model, the broker's strategy consists of internalising and externalising trades, and of speculative trades that are based on the information learned from the flow of the informed trader. {\new{See \cite{cartea2022broker} for details on the trading mechanisms, and for how brokers extract information about the informed trader's signal. To classify clients into informed or uninformed traders, brokers may carry out an analysis like that of Section 2 in \cite{cartea2023detecting}. A practical example of the setup in which we work is as follows: the broker may be providing liquidity in spot FX to their clients (e.g., LMAX broker in the EUR/USD pair)  and simultaneously trading in a venue with a central limit order book  (e.g., LMAX exchange in the limit order book for EUR/USD), furthermore, the broker may profile her clients into two groups of traders according to the profitability of their trades. An article with a similar formulation is \cite{barzykin2023algorithmic}, where the authors use trading speeds to externalise inventory in the lit market and they employ quotes à la Avellaneda-Stoikov to provide liquidity to  clients.}}

\cite{herdegen2023liquidity} study a game between dealers that compete for the order flow of a client.
Other stochastic games in the algorithmic trading literature study the interactions between agents liquidating inventory positions; see e.g., \cite{cont2023fast,jaber2023equilibrium}. The interaction between the broker and traders in this paper departs from this branch of the literature. Here, the incentives to trade are as follows: the informed trader profits from the private signal about the trend in the value of the asset, the uninformed trader is not strategic, and the broker manages inventory and  uses the trading speed of the informed trader to send speculative trades to the lit market.\footnote{This is in contrast to the work of \cite{cartea2023detecting} where machine learning techniques determine if individual trades are informed or uninformed.  }

Another branch in the literature  focuses on how a trader unwinds stochastic order flow.  \cite{cartea2020trading} derive an optimal liquidation strategy for a broker who makes liquidity and trades in a triplet of currency pairs, while managing stochastic order flow from her clients, in the presence of model uncertainty. In \cite{cartea2022double} the stochastic order flow is the proceeds from trading a stock in a foreign currency, so the broker must also manage exchange rate risk. 
\cite{nutz2023unwinding}  solve a control problem for the optimal externalisation schedule of an exogenous order flow with an Obizhaeva--Wang-type price impact and quadratic instantaneous costs. Recently, \cite{barzykin2024unwinding} employ stochastic filtering to solve the control problem of unwinding stochastic order flow with unobserved toxicity.
In our paper, 
the broker's objective is to manage the stochastic order flow she receives from her clients; the optimal strategies are those that balance hedging, speculative trading, and inventory control; similar to the previous two works, we also employ Obizhaeva–Wang-type price impact with quadratic instantaneous costs.

Finally, in a mean field setup,  \cite{baldacci2023mean} study a  model with one market maker and the average behaviour of an infinite number of liquidity takers. In a similar vein, 
\cite{bergault2024mean} characterise the mean-field Nash equilibrium of the strategies between  a broker, whose trades have instantaneous and  permanent price impact, and a large number of informed traders. In this paper we study the optimal responses of a broker and and informed trader,  which represents a collection of informed traders, where the broker's trades have instantaneous and transient price impact with exponential resilience.

The remainder of the paper proceeds as follows. Section \ref{sec: the model} introduces the model, the performance criteria, and defines the notion of a Nash equilibrium in our setup. 
Section \ref{sec: Nash} characterises the Nash equilibrium of the game and solves it explicitly. First, we show that both functionals are strictly concave, then, we compute their G\^ateaux derivatives, we characterise the Nash equilibrium with a system of FBSDEs, and we study existence and uniqueness of the system. 
Section \ref{sec: closed form Nash vector not} derives the closed-form solution and connects the existence and uniqueness of the FBSDE with the existence and uniqueness of a matrix Riccati differential equation; we show existence and uniqueness of this differential equation under some restrictions of model parameters. Finally, Section \ref{sec: numerics} shows numerical experiments and concludes.

\section{The model}\label{sec: the model}

Let $T>0$ be a given trading horizon and let $\mfT = [0,T]$. We fix a filtered probability space $\left(\Omega, \allowbreak\mcF, (\mcF_t)_{t\in\mfT}, \Pb \right)$ satisfying the usual conditions of right-continuity and completeness. The filtration $(\mcF_t)_{t\in\mfT}$ is the sigma-algebra generated by all processes below.

The broker trades with speed $(\nuB_t)_{t\in\mfT}$, the informed trader trades with speed $(\nuI_t)_{t\in\mfT}$, and the uninformed trader trades with speed $(\nuU_t)_{t\in\mfT}$. We assume that $\nuB,\nuI,\nuU\in \H$ where
\begin{equation}
    \H=\Big\{(\gamma_t)_{t\in\mfT} \,\,|\,\, \gamma\text{ is } \mcFB-\text{progressively measurable, and } \|\gamma\|^2:=\E\left[\textstyle\int_0^T\left(\gamma_s\right)^2\diff  [M^S,M^S]_s\right]\,<\,+\infty\Big\}\,,
\end{equation}
where  $(M^S_t)_{t\in\mfT}$ is a square-integrable martingale with respect to $(\mcF_t)_{t\in\mfT}$.
The midprice process is denoted by $(S_t)_{t\in\mfT}$ and is given by
\begin{eqnarray} \label{eqn: midprice dynamics}
S_t = S_0 + \underbrace{ \int_0^t \alpha_s\,\diff s }_{\text{signal}} + \underbrace{ Y_t }_{\text{impact}} + M^{S}_t\,,
\end{eqnarray}
where the broker's price price impact $(Y_t)_{t\in\mfT}$ satisfies
\begin{equation}\label{eq: transient}
\diff Y_t = ( \instantB\,\nuB_t-\decayB\,Y_t )\,\diff t  \,,\quad Y_0\in\mathbb{R}\,,
\end{equation}
and the trading signal process $(\alpha_t)_{t\in\mfT}$ is  square-integrable and progressively measurable.\footnote{{\new{We made the simplifying assumption that the signal enters the price as $\int_0^t \alpha_s\,\diff s$ as opposed to having it as a general stochastic process. This is a common simplifying assumption; see e.g., \cite{cartea2016incorporating}. }}} 
Here, $\decayB\geq 0$ is a  decay coefficient, and $\instantB\geq 0$ is the {\new{parameter controlling the instantaneous effect of the broker's trading on the transient impact}}. When $\decayB=0$ the price impact is permanent and linear in the integral of the broker's speed of trading.

The inventory process $\QB$ and cash process $\XB$ of the broker satisfy the equations
\begin{align}
    \diff \QB_t &= \left(\nuB_t - \nuI_t - \nuU_t\right)\diff t\,,
    & Q^B_0 = q^B\in\mathbb{R}\,,\label{eq: inventory of broker}\\
    \diff \XB_t &= - \left(S_t + \tempB\,\nuB_t\right)\nuB_t\,\diff t +   \left(S_t + \tempI\,\nuI_t\right)\nuI_t\,\diff t +   \left(S_t + \tempU\,\nuU_t\right)\nuU_t\,\diff t\,,
    & X^B_0 = 0\,,\label{eq: cash of broker}
\end{align}
where the positive constants $\tempB,\tempI,\tempU$ represent instantaneous transaction costs. We assume that the trading speed of the uninformed trader is square-integrable. Similarly, the inventory process $\QI$ and cash process $\XI$ of the informed trader satisfy the equations
\begin{align}
    \diff \QI_t &= \nuI_t\,\diff t\,, 
    & Q^I_0 = q^I\in\mathbb{R}\,,\label{eq: inventory of informed}\\
    \diff \XI_t &= - \left(S_t + \tempI\,\nuI_t\right)\nuI_t\,\diff t\,,
    & X^I_0 = 0\,.\label{eq: cash of informed}
\end{align}

Sometimes we write the control in the superscript when we highlight the controls that affect a process, for example,  we write $\XI[I,\nuI]$ and $\QI[I,\nuI]$ for the cash process and inventory process of the informed trader.
The sets of admissible strategies for the informed trader and the broker are both given by the set $\H$. 

Let $\nuB\in\mcAB$ and let $\nuI\in\mcAI$. The performance criterion of the informed trader is  $\pcI(\nuB,\nuI)$, given by
\begin{equation}\label{eq: performance criterion informed}
    \pcI(\nuB,\nuI) = \E\left[
    \XI_T + \QI_T\,S_T - \termpI\,\left(\QI_T\right)^2 - \runnI \int_0^T\left(\QI_s\right)^2\,\diff s 
    \right]\,,
\end{equation}
for $\termpI,\runnI$ non-negative inventory control constants.
Similarly, the performance criterion of the broker is  $\pcB(\nuB,\nuI)$, given by
\begin{equation}\label{eq: performance criterion broker}
    \pcB(\nuB,\nuI) = \E\left[
    \XB_T + \QB_T\,S_T - \termpB\,\left(\QB_T\right)^2 - \runnB \int_0^T\left(\QB_s\right)^2\,\diff s 
    \right]\,,
\end{equation}
for $\termpB,\runnB$ non-negative inventory control constants. 
We let 
\begin{equation}
    \varphi = \termpB - \frac{1}{2}\,\instantB\,,
\end{equation}
and we assume that $\varphi \geq 0$ and that $\tempB>\decayB\,\instantB\,T^2$.
These are non-restrictive technical assumptions  which are sufficient conditions to guarantee strict concavity of $J^B$ up to null sets, and that we also use to prove existence of a non-symmetric matrix Riccati differential equation that arises in the closed-form solution to Nash equilibrium of the problem.
In particular, the assumption that $\tempB>\decayB\,\instantB\,T^2$ is not restrictive because one expects  $\tempB \gg \instantB\,\decayB$; see, e.g., page 148 in \cite{cartea2015algorithmic}. {\new{The assumption $\varphi\geq 0$ ensures that, effectively, the broker penalises holding terminal inventory, as opposed to rewarding holding terminal inventory.  For example, when $\decayB=0$, the constraint $\varphi\geq 0$  implies that the overall terminal penalty on inventory (once we consider the effect of the permanent price impact) is non-negative; see bottom of page 148 in \cite{cartea2015algorithmic}. }
}

\begin{lemma}[Finiteness of Performance Criterion]
\label{lemma:finiteness}
    The functional  $\pcI:\mcAB\times\mcAI\to \R$ can be written as
\begin{align}\label{eq: repres pcI}
    \pcI(\nuB,\nuI) = S_0\,q^I - \termpI\left(q^I\right)^2+\E\left[
    \int_0^T \!\!\left\{-\tempI\,\nuI^2_t  + \QI_t\,\Big(\alpha_t+ ( \instantB\,\nuB_t-\decayB\,Y_t )
    -2\,\termpI\,\nuI_t -\runnI\,\QI_t 
    \Big)    \right\}\diff t 
    \right].
\end{align}
    Similarly, the functional $\pcB:\mcAB\times\mcAI\to \R$ can be written as 
\begin{align}
    \pcB(\nuB,\nuI) &= S_0\,q^B - \termpB\,\left(q^B\right)^2 \label{eq: repres pcB}\\
    &\quad + \E\left[
    \int_0^T \!\!\left\{-\tempB\,\nuB^2_t + \tempI\,\nuI^2_t + \tempU\,\nuU^2_t + \QB_t\,\Big(\alpha_t+ ( \instantB\,\nuB_t-\decayB\,Y_t )
    -2\,\termpB\,(\nuB_t - \nuI_t -\nuU_t) -\runnB\,\QB_t
    \Big)   \right\}\diff t 
    \right]\!.\nonumber
\end{align}
\end{lemma}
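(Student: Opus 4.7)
The plan is to obtain both representations by applying stochastic integration by parts to the term $Q^\cdot_T\,S_T$ and to the terminal inventory penalty, then substituting the integrated cash dynamics and taking expectations. Since $Q^I$ and $Q^B$ are absolutely continuous (hence of finite variation and continuous) processes, their quadratic variations vanish and their cross-variations with $S$ and with themselves disappear, which keeps all It\^o correction terms at zero.

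For the informed trader, I would first write
\begin{equation*}
Q^I_T\,S_T \;=\; q^I\,S_0 + \int_0^T Q^I_t\,\mathrm{d}S_t + \int_0^T S_t\,\nu^I_t\,\mathrm{d}t,
\end{equation*}
substitute $\mathrm{d}S_t = \alpha_t\,\mathrm{d}t + (\instantB\,\nu^B_t - \decayB\,Y_t)\,\mathrm{d}t + \mathrm{d}M^S_t$ from \eqref{eqn: midprice dynamics}--\eqref{eq: transient}, and add the integrated cash dynamics $X^I_T = -\int_0^T(S_t+\tempI\,\nu^I_t)\nu^I_t\,\mathrm{d}t$, which cancels the $\int_0^T S_t\,\nu^I_t\,\mathrm{d}t$ term. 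For the terminal penalty, I would use $(Q^I_T)^2 = (q^I)^2 + 2\int_0^T Q^I_t\,\nu^I_t\,\mathrm{d}t$. After taking expectations, the stochastic integral $\int_0^T Q^I_t\,\mathrm{d}M^S_t$ is a genuine martingale with mean zero; to justify this I would check its integrability using $\nu^I\in\H$ together with the bound $\mathbb{E}[\sup_{t\in\mfT}(Q^I_t)^2]\leq 2(q^I)^2+2T\,\|\nu^I\|^2$ (Cauchy--Schwarz) and the assumed square-integrability of $M^S$. Collecting terms then yields \eqref{eq: repres pcI}, and simultaneously proves finiteness of $\pcI$ since each summand in the integrand is integrable by similar $L^2$ estimates on $Y$ (via the explicit representation $Y_t=Y_0\,e^{-\decayB t}+\instantB\int_0^t e^{-\decayB(t-s)}\nu^B_s\,\mathrm{d}s$) and on $\alpha$.

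For the broker the argument is structurally identical. The additional ingredient is the expanded cash process \eqref{eq: cash of broker}, which after integration contributes the three running quadratic terms $-\tempB(\nu^B_t)^2+\tempI(\nu^I_t)^2+\tempU(\nu^U_t)^2$ and the term $\int_0^T S_t(-\nu^B_t+\nu^I_t+\nu^U_t)\,\mathrm{d}t = -\int_0^T S_t\,\mathrm{d}Q^B_t$ which again cancels with the $\int_0^T S_t\,\mathrm{d}Q^B_t$ arising from integration by parts in $Q^B_T S_T$. The terminal penalty is expanded via $(Q^B_T)^2=(q^B)^2+2\int_0^T Q^B_t(\nu^B_t-\nu^I_t-\nu^U_t)\,\mathrm{d}t$, and the martingale part $\int_0^T Q^B_t\,\mathrm{d}M^S_t$ vanishes in expectation by the same reasoning as above, using that $\nu^U\in\H$ is part of the standing assumptions.

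The only genuinely non-routine point is verifying the $L^2$-integrability needed both to justify Fubini/vanishing of the stochastic integrals and to ensure the resulting integrand on the right-hand side is absolutely integrable; this follows cleanly from $\nu^B,\nu^I,\nu^U\in\H$, square-integrability of $\alpha$ and $M^S$, and the linear-in-$\nu^B$ representation of $Y$. All remaining manipulations are algebraic rearrangements that match the stated forms \eqref{eq: repres pcI} and \eqref{eq: repres pcB}.
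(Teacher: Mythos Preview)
Your proposal is correct and follows essentially the same approach as the paper: the representation is obtained via the product rule (integration by parts) applied to $Q^\cdot_T S_T$ and $(Q^\cdot_T)^2$, and finiteness is established through $L^2$ bounds on $Q^I$, $Q^B$, and $Y$ derived from $\nu^B,\nu^I,\nu^U\in\H$ and the explicit exponential representation of $Y$. If anything, you are more explicit than the paper about the cancellation of $\int_0^T S_t\,\mathrm{d}Q^\cdot_t$ and the handling of the stochastic integral $\int_0^T Q^\cdot_t\,\mathrm{d}M^S_t$, which the paper subsumes under ``follows immediately from the product rule.''
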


\begin{proof}
The representation follows immediately from the product rule and the dynamics above. To see that  for $\nuI\in\mcAI$ and $\nuB\in\mcAB$ both $\pcI(\nuB,\nuI)\in\R$ and $\pcB(\nuB,\nuI)\in\R$ (i.e., the performance criterion are finite), we proceed as follows. Use the inequality $2\,|x\,y|\leq x^2+y^2$, together with the following three results:
\begin{enumerate}[label=(\roman*)]
    \item if $\nuI\in\mcAI$, then
    \begin{align}
    \left(\QI[I,\nuI]_t\right)^2 &= \left(q_0^I\right)^2 + 2\,q_0^I\,\int_0^t\nuI_s\,\diff s + \left(\int_0^t\nuI_s\,\diff s\right)^2
    \\
    &\leq \left(q_0^I\right)^2 + \left|q_0^I\right|\,\int_0^t\left(1+\nuI^2_s\right)\,\diff s + \int_0^t\left(\nuI_s\right)^2\diff s\,,
    \end{align}
    thus, there exists $C\in\R^+$ such that 
    \begin{align}
        \left|\QI[I,\nuI]_t\right|^2 \leq C\left(1 + \int_0^t|\nuI_s|^2\,\diff s\right)\,,
    \end{align}
    which implies that 
\begin{align}
\E\left[\int_0^T\left|\QI[I,\nuI]_t\right|^2\,\diff t\right] \leq C\,T\left(1 + \E\left[\int_0^T|\nuI_s|^2\,\diff s\right]\right)<+\infty\,.
\end{align}
because $\nuI\in\H$.
    \item if $\nuI\in\mcAI$, $\nuB\in \mcAB$, and $\nuU$ is square-integrable, similar to the  above, there exists $C\in\R^+$ such that 
\begin{align}
\E\left[\int_0^T\left|\QB[B,\nuB]_t\right|^2\,\diff t\right] \leq C\,T\left(1 + \E\left[\int_0^T|\nuI_s|^2\,\diff s + \int_0^T|\nuB_s|^2\,\diff s+ \int_0^T|\nuU_s|^2\,\diff s\right]\right)<+\infty\,.
\end{align}
\item if $\nuB\in\mcAB$, then, from \eqref{eq: transient},
    \begin{align}
        Y_t = e^{-\decayB\,t}\,Y_0 + \instantB\,\int_0^t \nuB_s\,e^{-\decayB\,(t-s)}\,\diff s\,.
    \end{align}
    Then, there exists $C\in\R^+$ such that 
    \begin{align}
       \E\left[ \int_0^T |Y_t|^2\diff t\right] \leq C\,T\left(1+ \E\left[\int_0^t |\nuB_s|^2\,\diff s\right]\right)<+\infty.
    \end{align}
\end{enumerate}

Finally, using that $\nuI,\nuU,\nuB,\QI[I,\nuI],\QB[B,\nuB],\alpha, Y$ are all square-integrable, together with the representations in \eqref{eq: repres pcI} and \eqref{eq: repres pcB} concludes the proof.
\end{proof}

\begin{definition}\label{def: Nash equilibrium}
The pair of trading speeds $\left(\nuBstar, \nuIstar\right)\in \mcAB\times\mcAI$ is a \textbf{Nash equilibrium} of the strategies between the informed trader and the broker if the following two conditions hold:
\begin{enumerate}[label=(\roman*)]
    \item for any speed of trading $\nuB\in\mcAB$
    \begin{equation}
        \pcB(\nuB, \nuIstar) \leq \pcB(\nuBstar, \nuIstar)\,,
    \end{equation}
    \item for any speed of trading $\nuI\in\mcAI$
    \begin{equation}
        \pcI(\nuBstar, \nuI) \leq \pcI(\nuBstar, \nuIstar)\,.
    \end{equation}
\end{enumerate}
\end{definition}

In the next section we characterise  the perfect information  Nash equilibrium of the trading strategies.

\section{Nash equilibrium}\label{sec: Nash}

\subsection{Characterisation of equilibrium}

We employ G\^ateaux derivatives to characterise the Nash equilibrium between the broker and the informed trader.  Let $\nuB,\nuB'\in\mcAB$ and $\nuI,\nuI'\in\mcAI$. The directional derivative of $\pcB$ at $\nuB$ in the direction of $\nuBdir:=(\nuB'-\nuB)$ is given by
\begin{equation}
\langle \mathcal{D}\,\pcB(\nuB,\nuI),\nuBdir\rangle=\lim_{\eps\to 0}\tfrac{1}{\eps}\left[\pcB(\nuB+ \eps\,\nuBdir, \nuI)-\pcB(\nuB,\nuI)\right]\,,
\end{equation}
when the limit exists. Similarly, the directional derivative of $\pcI$ at $\nuI$ in the direction of $\nuIdir:=(\nuI'-\nuI)$ is given by
\begin{equation}
\langle \mathcal{D}\,\pcI(\nuB,\nuI),\nuIdir\rangle=\lim_{\eps\to 0}\tfrac{1}{\eps}\left[\pcI(\nuB, \nuI+ \eps\,\nuIdir)-\pcI(\nuB,\nuI)\right]\,,
\end{equation}
when the limit exists.

First, we show that for fixed $\nuB\in\mcAB$ the functional $\nuI\to\pcI(\nuB,\nuI)$ is strictly concave; this is proven in the next proposition.

\begin{proposition}\label{prop: pcI concave}
Let $\nuB\in\mcAB$, the functional $\pcI(\nuB,\,\,\cdot\,):\mcAI\to\R$ is strictly concave.
\end{proposition}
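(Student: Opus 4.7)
The plan is to exploit the quadratic structure in $\nuI$. For fixed $\nuB$ (and hence fixed $Y$), the representation \eqref{eq: repres pcI} exhibits $\nuI \mapsto \pcI(\nuB,\nuI)$ as a quadratic functional of $\nuI$. Strict concavity is therefore equivalent to showing that its purely quadratic part is strictly negative along every nonzero direction $\nuIdir \in \mcAI$. This is the cleanest route because it avoids a two-point convex combination argument in favour of a single second-order expansion.

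To extract the quadratic part, I take arbitrary $\nuI,\nuIdir \in \mcAI$ and set $q^{\nuIdir}_t := \int_0^t \nuIdir_s\,\diff s$. Linearity of \eqref{eq: inventory of informed} gives $\QI[I,\nuI+\eps\nuIdir]_t = \QI[I,\nuI]_t + \eps\,q^{\nuIdir}_t$. Substituting $\nuI+\eps\nuIdir$ into \eqref{eq: repres pcI} and collecting the coefficient of $\eps^2$ (which is independent of $\nuI$ because the functional is quadratic in $\nuI$) yields
\begin{equation*}
\E\!\left[\int_0^T \Big( -\tempI\,\nuIdir_t^{\,2} \;-\; 2\,\termpI\,q^{\nuIdir}_t\,\nuIdir_t \;-\; \runnI\,(q^{\nuIdir}_t)^2 \Big)\diff t\right].
\end{equation*}
The three contributions come, respectively, from the $-\tempI\,\nuI_t^2$ term, the $-2\,\termpI\,\QI_t\,\nuI_t$ cross term, and the $-\runnI\,(\QI_t)^2$ inventory penalty; all other contributions in \eqref{eq: repres pcI} are affine in $\nuI$ and so do not contribute to order $\eps^2$.

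The middle term collapses via the elementary identity $\diff(q^{\nuIdir}_t)^2 = 2\,q^{\nuIdir}_t\,\nuIdir_t\,\diff t$ which, together with $q^{\nuIdir}_0=0$, gives $\int_0^T 2\,q^{\nuIdir}_t\,\nuIdir_t\,\diff t = (q^{\nuIdir}_T)^2$. The quadratic part thus reduces to
\begin{equation*}
-\tempI\,\|\nuIdir\|^2 \;-\; \termpI\,\E\!\left[(q^{\nuIdir}_T)^2\right] \;-\; \runnI\,\E\!\left[\int_0^T (q^{\nuIdir}_t)^2\,\diff t\right],
\end{equation*}
which is strictly negative whenever $\|\nuIdir\|>0$, since $\tempI>0$ while $\termpI,\runnI \geq 0$. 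By the quadratic structure this is equivalent to strict concavity of $\pcI(\nuB,\cdot)$. I do not anticipate any substantive obstacle: the only delicate step is the cross-term integration by parts, and the integrability needed to justify passage between $\pcI(\nuB,\nuI+\eps\nuIdir)$ and the $\eps^2$-coefficient is automatic from $\nuI,\nuIdir\in\H$ together with the bounds already used in the proof of Lemma \ref{lemma:finiteness}.
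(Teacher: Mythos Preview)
Your proposal is correct and follows essentially the same route as the paper. Both arguments reduce strict concavity to negative-definiteness of the quadratic part of $\pcI(\nuB,\cdot)$, and both handle the cross term $\QI\,\nuI$ via the same integration identity (the paper writes it via Fubini as $\tfrac12\,\E[(\int_0^T(\nuI_t-\nuIbis_t)\,\diff t)^2]$, you write it as $\tfrac12\,\diff(q^{\nuIdir}_t)^2$); the only cosmetic difference is that the paper carries the full two-point convex combination through the computation whereas you extract the $\eps^2$-coefficient directly.
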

\begin{proof}
Let $\nuB\in\mcAB$ and $\nuI,\nuIbis \in\mcAI$. Let $A\in \mcFI\otimes \mcB(\mfT)$  with $\mu(A) > 0$, where $\mu:=\Pb\otimes\diff t$ and for all $(\omega,t)\in A$  we have that $\nuI_t(\omega) \neq \nuIbis_t(\omega)$, i.e., $\nuI$ and $\nuIbis$ differ on a set with non-zero $\mu$-measure. 
Let $\rho \in (0,1)$, we need to show that 
\begin{equation}
    \pcI(\nuB, \rho\,\nuI + (1-\rho)\,\nuIbis) > \rho\,\pcI(\nuB, \nuI) + (1-\rho)\,\pcI(\nuB,\nuIbis)\,.
\end{equation}
First, from linearity of \eqref{eq: inventory of informed}, observe that 
\begin{equation}
    \QI[I,\rho\,\nuI + (1-\rho)\,\nuIbis]_t = \rho\,\QI[I,\nuI]_t + (1-\rho)\,\QI[I,\nuIbis]_t\,,
\end{equation}
thus,
\allowdisplaybreaks
\begin{align}
    &\pcI(\nuB, \rho\,\nuI + (1-\rho)\,\nuIbis) \\
    &\quad =S_0\,q^I - \termpI\left(q^I\right)^2+ \E\Bigg[\int_0^T \bigg\{-\tempI\,\left(\rho\,\nuI_t + (1-\rho)\,\nuIbis_t\right)^2  + \left(\rho\,\QI[I,\nuI]_t + (1-\rho)\,\QI[I,\nuIbis]_t\right)\,(\alpha_t + ( \instantB\,\nuB_t-\decayB\,Y_t )) 
    \nonumber\\
    &\hspace*{14em}
    -2\,\termpI\,\left(\rho\,\QI[I,\nuI]_t + (1-\rho)\,\QI[I,\nuIbis]_t\right)\,\left(\rho\,\nuI_t + (1-\rho)\,\nuIbis_t\right) 
    \nonumber
    \\
    &\hspace*{14em} -\runnI\,\left(\rho\,\QI[I,\nuI]_t + (1-\rho)\,\QI[I,\nuIbis]_t\right)^2  \bigg\}\,\diff t  
    \Bigg]
    \\
    &\quad = \rho\,\pcI(\nuB, \nuI ) + \rho\,(1-\rho)\,\E\left[\int_0^T \left\{ \tempI\,\nuI^2_t + 2\,\termpI\,\QI[I,\nuI]_t\,\nuI_t + \runnI\,\left(\QI[I,\nuI]_t\right)^2\right\}\diff t 
    \right]
    \nonumber
    \\
    &\hspace*{6.25em}
    + (1-\rho) \,\pcI(\nuB, \nuIbis ) + \rho\,(1-\rho)\E\left[\int_0^T \left\{ \tempI\,\nuIbis^2_t + 2\,\termpI\,\QI[I,\nuIbis]_t\,\nuIbis_t + \runnI\,\left(\QI[I,\nuIbis]_t\right)^2\right\}\diff t
    \right] 
    \nonumber\\
    &\hspace*{6.25em}  - 2\, \rho\,(1-\rho)\,\E\left[\int_0^T \left\{ \tempI\,\nuI_t\,\nuIbis_t + \termpI\,\QI[I,\nuI]_t\,\nuIbis_t +  \termpI\,\QI[I,\nuIbis]_t\,\nuI_t + \runnI\,\QI[I,\nuIbis]_t\,\QI[I,\nuI]_t\right\}\diff t
    \right] \\
    &\quad = \rho\,\pcI(\nuB, \nuI ) + (1-\rho)\,\pcI(\nuB, \nuIbis ) \nonumber\\
    &\qquad \qquad + \rho\,(1-\rho)\,\E\left[\int_0^T\left\{ \tempI\left(\nuI_t - \nuIbis_t \right)^2 + 2\,\termpI\left( \QI[\nuI]_t - \QI[\nuIbis]_t\right) \left( \nuI_t - \nuIbis_t\right) +  \runnI\,\left(\QI[I,\nuI]_t- \QI[I,\nuIbis]_t\right)^2   \right\}\diff t
    \right]\,.
\end{align}
Then, by the definition of $\QI$ we have 
\begin{align}
    \E\left[\int_0^T \left( \QI[I,\nuI]_t - \QI[I,\nuIbis]_t\right) \left( \nuI_t - \nuIbis_t\right) \diff t
    \right] & = \E\left[\int_0^T \left( \int_0^t (\nuI_u - \nuIbis_u)\diff u\right) \left( \nuI_t - \nuIbis_t\right) \diff t
    \right] 
    \nonumber \\
    & = \frac{1}{2}\,\E\left[\int_0^T \int_0^T \left(\nuI_u - \nuIbis_u\right) \left( \nuI_t - \nuIbis_t\right) \diff u\, \diff t
    \right] 
    \nonumber \\
    & = \frac{1}{2}\,\E\left[\left(\int_0^T (\nuI_t - \nuIbis_t)\, \diff t\right)^2
    \right]\,, \label{eqn:int-q-eta}
\end{align}
where the equality in the second line follows because $\nuI,\nuIbis \in \mcAI$, hence, we apply Fubini's theorem.
Therefore, 
\begin{align*}
   \pcI(\nuB, \rho\,\nuI + (1-\rho)\,\nuIbis) =& \;\;
   \rho\,\pcI(\nuB, \nuI ) + (1-\rho)\,\pcI(\nuB, \nuIbis ) 
   \\
    &+ \rho\,(1-\rho)\,\E\left[\int_0^T\left\{ \tempI\left(\nuI_t - \nuIbis_t \right)^2 +  \runnI\,\left(\QI[I,\nuI]_t- \QI[I,\nuIbis]_t\right)^2   \right\}\diff t
    \right] 
    \\ 
    & + \rho\,(1-\rho)\,\termpI\,\E\left[\left(\int_0^T \nuI_t - \nuIbis_t\, \diff t\right)^2
       \right]\,.
\end{align*}
Hence, $\pcI(\nuB, \rho\,\nuI + (1-\rho)\,\nuIbis) > \rho\,\pcI(\nuB, \nuI) + (1-\rho)\,\pcI(\nuB,\nuIbis)$ because $\mu(A) > 0$ and $\E\left[\int_0^T \left(\nuI_t - \nuIbis_t\right)^2 \diff t\right] >0$.
\end{proof}

The proposition below shows that for fixed $\nuI\in\mcAI$ the broker's functional $\nuB\to\pcB(\nuI,\nuB)$ is also strictly concave.
\begin{proposition}\label{prop: pcB concave}
Let $\nuI\in\mcAI$, the functional $\pcB(\,\cdot\,,\, \nuI):\mcAB\to\R$ is strictly concave.
\end{proposition}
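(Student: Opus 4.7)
The plan is to mimic the proof of Proposition \ref{prop: pcI concave}, using the fact that, with $\nuI$ held fixed, both the inventory $\QB$ and the transient impact $Y$ depend affinely on $\nuB$: for $\nuB,\nuBbis\in\mcAB$ and $\rho\in(0,1)$, setting $\zeta:=\rho\,\nuB+(1-\rho)\,\nuBbis$, linearity of the integral defining $\QB$ and of the ODE \eqref{eq: transient} (both with shared initial data $q^B$ and $Y_0$) yields $\QB[B,\zeta]=\rho\,\QB[B,\nuB]+(1-\rho)\,\QB[B,\nuBbis]$ and $Y[\zeta]=\rho\,Y[\nuB]+(1-\rho)\,Y[\nuBbis]$. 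First I would pick $\nuB\neq\nuBbis$ on a set $A\in\mcF\otimes\mcB(\mfT)$ of positive $\mu$-measure, insert these identities into the representation \eqref{eq: repres pcB}, and apply $\rho\,a^2+(1-\rho)\,b^2-(\rho\,a+(1-\rho)\,b)^2=\rho(1-\rho)(a-b)^2$ to each of the four quadratic blocks in the integrand: $\tempB\,\nuB^2$; the pair $\instantB\,\QB\,\nuB$ and $-2\,\termpB\,\QB\,\nuB$, which combine into $-2\,\varphi\,\QB\,\nuB$; $\decayB\,\QB\,Y$; and $\runnB\,\QB^2$. Contributions linear in $\nuB$ (those involving $\alpha$, $\nuI$, $\nuU$) cancel.

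Writing $\psi:=\nuB-\nuBbis$, $P:=\int_0^{\cdot}\psi_s\,\diff s$, and $\tilde Y:=Y[\nuB]-Y[\nuBbis]$, and reusing the Fubini manipulation of \eqref{eqn:int-q-eta} to reduce $\E[\int_0^T P_t\,\psi_t\,\diff t]$ to $\tfrac12\,\E[(\int_0^T \psi_t\,\diff t)^2]$, I expect the difference $\pcB(\zeta,\nuI)-\rho\,\pcB(\nuB,\nuI)-(1-\rho)\,\pcB(\nuBbis,\nuI)$ to equal $\rho(1-\rho)\,\mathcal Q$, where
\begin{equation*}
\mathcal Q = \tempB\,\E\!\left[\int_0^T \psi_t^2\,\diff t\right] + \varphi\,\E\!\left[\left(\int_0^T \psi_t\,\diff t\right)^2\right] + \decayB\,\E\!\left[\int_0^T P_t\,\tilde Y_t\,\diff t\right] + \runnB\,\E\!\left[\int_0^T P_t^2\,\diff t\right].
\end{equation*}
Three of the four summands are non-negative under the standing hypotheses $\tempB>0$, $\varphi\geq 0$, $\runnB\geq 0$, and the first is strictly positive since $\mu(A)>0$.

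The step I expect to be the main obstacle is the sign of the cross term $\decayB\,\E[\int_0^T P_t\,\tilde Y_t\,\diff t]$, which is not manifestly non-negative because $P$ and $\tilde Y$ are different linear functionals of $\psi$. My plan is to handle it by a pathwise integration by parts: setting $\Xi_t:=\tilde Y_t/\instantB=\int_0^t \psi_s\,e^{-\decayB(t-s)}\,\diff s$, which satisfies $\Xi_0=0$ and $\diff\Xi_t=(\psi_t-\decayB\,\Xi_t)\,\diff t$, expanding $\diff(P_t\,\Xi_t)=(P_t\,\psi_t+\Xi_t\,\psi_t-\decayB\,P_t\,\Xi_t)\,\diff t$, and substituting the elementary identities $\int_0^T P\,\psi\,\diff t=\tfrac12 P_T^2$ and $\int_0^T \Xi\,\psi\,\diff t=\tfrac12 \Xi_T^2+\decayB\int_0^T \Xi^2\,\diff t$, I expect to obtain the pathwise identity
\begin{equation*}
\decayB\int_0^T P_t\,\tilde Y_t\,\diff t \;=\; \tfrac{\instantB}{2}(P_T-\Xi_T)^2 + \decayB\,\instantB\int_0^T \Xi_t^2\,\diff t \;\geq\;0.
\end{equation*}
Taking expectations and combining with the other three non-negative contributions would give $\mathcal Q>0$, hence $\pcB(\zeta,\nuI)>\rho\,\pcB(\nuB,\nuI)+(1-\rho)\,\pcB(\nuBbis,\nuI)$. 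Via this route the auxiliary bound $\tempB>\decayB\,\instantB\,T$ is not actually invoked; an alternative, less clean argument would instead try to dominate $|\decayB\,\E[\int P\,\tilde Y]|$ by $\tempB\,\E[\int \psi^2]$ using Cauchy--Schwarz estimates on $|P_t|$ and $|\tilde Y_t|$, and this is presumably where a smallness-in-$T$ assumption enters in the paper's preferred proof.
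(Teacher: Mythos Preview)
Your decomposition of $\pcB(\rho\,\nuB+(1-\rho)\,\nuBbis,\nuI)-\rho\,\pcB(\nuB,\nuI)-(1-\rho)\,\pcB(\nuBbis,\nuI)$ into $\rho(1-\rho)\,\mathcal Q$ with the four quadratic blocks matches the paper's proof exactly, as does the Fubini step for the $(2\,\termpB-\instantB)\,P\,\psi$ term. The difference is in how the transient-impact cross term $\decayB\,\E[\int_0^T P_t\,\tilde Y_t\,\diff t]$ is handled. The paper does precisely what you anticipate in your final sentence: it bounds
\[
\Bigl|\E\Bigl[\textstyle\int_0^T P_t\,\tilde Y_t\,\diff t\Bigr]\Bigr|\;\le\;T\,\instantB\,\E\Bigl[\textstyle\int_0^T\psi_t^2\,\diff t\Bigr]
\]
and then invokes the standing assumption $\tempB>\decayB\,\instantB\,T$ to absorb the (possibly negative) cross term into the strictly positive $\tempB$-block. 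Your integration-by-parts identity
\[
\decayB\int_0^T P_t\,\tilde Y_t\,\diff t \;=\; \tfrac{\instantB}{2}\,(P_T-\Xi_T)^2 + \decayB\,\instantB\int_0^T \Xi_t^2\,\diff t
\]
is correct (I checked it from $\diff(P_t\,\Xi_t)$ together with $\int_0^T P\,\psi=\tfrac12 P_T^2$ and $\int_0^T \Xi\,\psi=\tfrac12\Xi_T^2+\decayB\int_0^T\Xi^2$), and shows the cross term is in fact non-negative pathwise. So your argument is sharper: it establishes strict concavity of $\pcB(\,\cdot\,,\nuI)$ without ever invoking $\tempB>\decayB\,\instantB\,T$, whereas the paper's proof genuinely needs that hypothesis. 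The trade-off is that the paper's Cauchy--Schwarz estimate is a one-liner, while your route requires the auxiliary calculation with $\Xi$; but your route buys a strictly stronger conclusion.
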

\begin{proof}
    Let $\nuI\in\mcAI$ and $\nuB,\nuBbis \in\mcAB$. Let $A\in \mcFB\otimes \mcB(\mfT)$  with $\mu(A) > 0$, where $\mu:=\Pb\otimes\diff t$, and  for all $(\omega,t)\in A$ we have that $\nuB_t(\omega) \neq \nuBbis_t(\omega)$, i.e., $\nuB$ and $\nuBbis$ differ on a set with non-zero $\mu$-measure.
Let $\rho \in (0,1)$, we need to show that 
\begin{equation}
    \pcB(\rho\,\nuB + (1-\rho)\,\nuBbis, \nuI) > \rho\,\pcB(\nuB, \nuI) + (1-\rho)\,\pcB(\nuBbis,\nuI)\,.
\end{equation}
First, from linearity of \eqref{eq: inventory of broker}, observe that 
\begin{equation}
    \QB[B,\rho\,\nuB + (1-\rho)\,\nuBbis]_t = \rho\,\QB[B,\nuB]_t + (1-\rho)\,\QB[B,\nuBbis]_t
\end{equation}
and, from linearity of \eqref{eq: transient}, 
\begin{equation}
   Y^{\rho\,\nuB + (1-\rho)\,\nuBbis}_t = e^{-\decayB\,t}\,Y_0 + \instantB\,\int_0^t e^{-\decayB\,(t-s)}\,\left(\rho\,\nuB_s + (1-\rho)\,\nuBbis_s\right)\,\diff s  = \rho\,Y^\nuB_t + (1-\rho)\,Y^\nuBbis_t\,.
\end{equation}
Next, 
\allowdisplaybreaks
\begin{align}
&\pcB(\rho\,\nuB + (1-\rho)\,\nuBbis, \nuI) = S_0\,q^B - \termpB\,\left(q^B\right)^2 \\
&\; +\E\Bigg[\int_0^T \bigg\{  -\tempB\,\left(\rho\,\nuB_t + (1-\rho)\,\nuBbis_t\right)^2 + \tempI\,\nuI^2_t + \tempU\,\nuU^2_t  + \left(\rho\,\QB[B,\nuB]_t + (1-\rho)\,\QB[B,\nuBbis]_t\right)\,\alpha_t \nonumber \\
&\quad\qquad + \left(\rho\,\QB[B,\nuB]_t + (1-\rho)\,\QB[B,\nuBbis]_t\right)\,\left(\instantB\,\rho\,\nuB_t + \instantB\,(1-\rho)\,\nuBbis_t-\decayB\,\rho\,Y^\nuB_t-\decayB\,(1-\rho)\,Y^\nuBbis_t \right)\\
&\quad\qquad  -2\,\termpB\,\left(\rho\,\QB[B,\nuB]_t + (1-\rho)\,\QB[B,\nuBbis]_t\right)\,\Big(\rho\,\nuB_t + (1-\rho)\,\nuBbis_t - \nuI_t -\nuU_t\Big)  -\runnB\,\left(\rho\,\QB[B,\nuB]_t + (1-\rho)\,\QB[B,\nuBbis]_t\right)^2   \bigg\}\,\diff t
\Bigg] 
\nonumber \\
& = \rho\,\pcB(\nuB, \nuI) + \rho\,(1-\rho)\,\E\left[\int_0^T \left\{ \tempB\,\nuB^2_t + \decayB\,\QB[B,\nuB]_t\,Y^{\nuB}_t - \instantB\,\QB[B,\nuB]_t\,\nuB_t + 2\,\termpB\,\QB[B,\nuB]_t\,\nuB_t + \runnB\,\left(\QB[B,\nuB]_t\right)^2 \right\} \diff t
\right]
\nonumber \\
&\quad + (1-\rho)\,\pcB(\nuBbis, \nuI) + \rho\,(1-\rho)\,\E\left[\int_0^T \left\{ \tempB\,\nuBbis^2_t - \instantB\,\QB[B,\nuBbis]_t\,\nuBbis_t + \decayB\,\QB[B,\nuBbis]_t\,Y^{\nuBbis}_t + 2\,\termpB\,\QB[B,\nuBbis]_t\,\nuBbis_t   + \runnB\,\left(\QB[B,\nuBbis]_t\right)^2 \right\} \diff t
\right] 
\nonumber\\
&\quad - \rho\,(1-\rho)\,\E\Bigg[\int_0^T \bigg\{ 2\,\tempB\,\nuB_t\,\nuBbis_t - \instantB\,\left(\QB[B,\nuBbis]_t\,\nuB_t + \QB[B,\nuB]_t\,\nuBbis_t\right) + \decayB\,\left(\QB[B,\nuBbis]_t\,Y^{\nuB}_t + \QB[B,\nuB]_t\,Y^{\nuBbis}_t\right) \\
&\qquad\hspace{14em} + 2\,\termpB \,\left(\QB[B,\nuB]_t\,\nuBbis_t  + \QB[B,\nuBbis]_t\,\nuB_t\right)  + 2\,\runnB\,\QB[B,\nuB]_t\,\QB[B,\nuBbis]_t \bigg\}\,\diff t
\Bigg] 
\nonumber\\
&= \rho\,\pcB(\nuB, \nuI) + (1-\rho)\,\pcB(\nuBbis, \nuI) 
\nonumber\\
&\qquad
+ \rho\,(1-\rho)\, \E\Bigg[ \int_0^T \bigg\{ \tempB\left(\nuB_t - \nuBbis_t\right)^2 +(2\,\termpB- \instantB)\,\left(\QB[B,\nuB]_t - \QB[B,\nuBbis]_t \right)\left(\nuB_t -\nuBbis_t\right) 
\\
&\hspace{11em}+ \decayB\,\left(\QB[B,\nuB]_t - \QB[B,\nuBbis]_t \right)\left(Y^\nuB_t -Y^\nuBbis_t\right)  +  \runnB \left(\QB[B,\nuB]_t - \QB[B,\nuBbis]_t\right)^2  \bigg\}\,\diff t
\label{eqn:JB-inequality}
\Bigg]\,,
\end{align}
because $\mu(A)>0$ and $\E\left[ \int_0^T \left(\nuB_t - \nuBbis_t\right)^2\diff t
\right]>0$. Further, as in \eqref{eqn:int-q-eta},
\[
\E\left[\textstyle\int_0^T \left(\QB[B,\nuB]_t - \QB[B,\nuBbis]_t \right)\left(\nuB_t -\nuBbis_t\right) \diff t
\right] \allowbreak = \tfrac{1}{2}\,\E\left[\left( \textstyle\int_0^T (\nuB_t -\nuBbis_t )\,\diff t\right)^2
\right]\geq 0.
\]
Moreover,
\begin{align}
\Bigg|\E\left[\textstyle\int_0^T \left(\QB[B,\nuB]_t - \QB[B,\nuBbis]_t \right)\left(Y^\nuB_t -Y^\nuBbis_t\right) \diff t
\right] \Bigg| &\leq \E\left[\textstyle\int_0^T \left(\textstyle\int_0^t |\nuB_s-\nuBbis_s| \,\diff s \right)\left(\instantB\,\textstyle\int_0^t e^{-\decayB\,(t-u)}|\nuB_u-\nuBbis_u |\diff u\right) \diff t
\right] \nonumber \\
&\leq T^2\,\instantB\,\E\left[\textstyle\int_0^T \left(\nuB_s-\nuBbis_s\right)^2 \,\diff s 
\right]\,.
\end{align}

With these inequalities and \eqref{eqn:JB-inequality}, it follows that 
\begin{equation}
  \pcB(\rho\,\nuB + (1-\rho)\,\nuBbis, \nuI) >  \rho\,\pcB(\nuB, \nuI) + (1-\rho)\,\pcB(\nuBbis, \nuI)\,,
\end{equation}
because $2\,\termpB -\instantB\geq 0$ and $\tempB>\decayB\,\instantB\,T^2$, which concludes the proof.
\end{proof}

Armed with strict the concavity of both $\pcI$ and $\pcB$ (in the relevant variables), we study the G\^ateaux derivatives of both functionals.

\begin{proposition}[G\^ateaux derivative of informed trader's functional]\label{prop: Gateaux pcI}
Let $\nuB\in\mcAB$ and $\nuI,\nuI'\in\mcAI$. The G\^ateaux derivative of $\pcI$ at $\nuI$ in the direction of $\nuIdir:=(\nuI'-\nuI)$ is given by
\begin{align}
    \langle \mathcal{D}\pcI(\nuB,\nuI),\nuIdir\rangle =  \E\left[\int_0^T \nuIdir_t \left\{-2\,\tempI\,\nuI_t - 2\,\termpI\,\QI_t + 
    \int_t^T \left(\alpha_u +  \instantB\,\nuB_t-\decayB\,Y_u   - 2\,\termpI\,\nuI_u - 2\,\runnI\,\QI_u\right) \diff u 
    \right\} \diff t
    \right]\,.
\end{align}
\end{proposition}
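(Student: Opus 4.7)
The plan is to compute the Gâteaux derivative directly from the representation \eqref{eq: repres pcI} of $\pcI$ established in Lemma \ref{lemma:finiteness}, exploiting the linearity of the map $\nuI\mapsto\QI$ to expand $\pcI(\nuB,\nuI+\eps\,\nuIdir)$ as a quadratic polynomial in $\eps$, and then passing to the limit. I would first observe that, from \eqref{eq: inventory of informed}, the perturbation $\nuI\mapsto \nuI+\eps\,\nuIdir$ produces
\begin{equation*}
\QI[I,\nuI+\eps\nuIdir]_t=\QI[I,\nuI]_t+\eps\,\widetilde{Q}_t,\qquad \widetilde{Q}_t:=\int_0^t\nuIdir_s\,\diff s.
\end{equation*}
Substituting into \eqref{eq: repres pcI} and expanding, all terms in $\eps$ collect into a linear part $\eps\,L$ and a quadratic residual $\eps^2\,R$. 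The residual $R$ is a finite expectation by exactly the same square-integrability arguments used in the proof of Lemma \ref{lemma:finiteness}; hence $\eps^{-1}[\pcI(\nuB,\nuI+\eps\,\nuIdir)-\pcI(\nuB,\nuI)]=L+\eps\,R\to L$ as $\eps\to 0$.

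Next I would identify $L$ explicitly. Expanding $-\tempI(\nuI_t+\eps\,\nuIdir_t)^2$, the cross term $\QI[I,\cdot]_t\cdot(-2\,\termpI\,\nuI_t)$, the running cost $-\runnI\,(\QI[I,\cdot]_t)^2$, and the inner products of $\QI[I,\cdot]_t$ with $\alpha_t+\instantB\,\nuB_t-\decayB\,Y_t$, produces two species of terms in $L$: a \emph{local} part
\begin{equation*}
\E\!\left[\int_0^T\nuIdir_t\left(-2\,\tempI\,\nuI_t-2\,\termpI\,\QI_t\right)\diff t\right],
\end{equation*}
and a \emph{non-local} part of the form $\E\big[\int_0^T\widetilde{Q}_t\,F_t\,\diff t\big]$ with
\begin{equation*}
F_t=\alpha_t+\instantB\,\nuB_t-\decayB\,Y_t-2\,\termpI\,\nuI_t-2\,\runnI\,\QI_t.
\end{equation*}

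The key algebraic step is then an application of Fubini's theorem to the non-local part to convert it into the advertised inner time integral:
\begin{equation*}
\E\!\left[\int_0^T \widetilde{Q}_t\,F_t\,\diff t\right]
=\E\!\left[\int_0^T\!\!\int_0^t \nuIdir_s\,F_t\,\diff s\,\diff t\right]
=\E\!\left[\int_0^T \nuIdir_t\int_t^T F_u\,\diff u\,\diff t\right],
\end{equation*}
after relabelling. Combining the local and non-local contributions yields the stated formula.

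The main obstacles are purely technical: (i) justifying the passage $\eps\to 0$, which I would do by showing the quadratic coefficient $R$ is in $L^1$ via the same $L^2$ bounds on $\QI[I,\cdot]$ and $Y$ used in Lemma \ref{lemma:finiteness}, so convergence is in fact deterministic once the expectation is taken; and (ii) justifying Fubini, which holds because $\nuIdir\in\mcAI=\H$ and $F\in L^2(\Pb\otimes\diff t)$ — indeed each summand of $F$ is square-integrable by assumption ($\alpha$), by $\nuB\in\H$ together with the bound on $Y$ from Lemma \ref{lemma:finiteness}(iii), and by the $L^2$ bound on $\QI$ from Lemma \ref{lemma:finiteness}(i). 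Neither step is delicate, so the proof is essentially a careful bookkeeping of the expansion followed by a single Fubini swap.
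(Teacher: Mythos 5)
Your proposal is correct and follows essentially the same route as the paper's proof: expand $\pcI(\nuB,\nuI+\eps\,\nuIdir)$ using the affine dependence of $\QI$ on the control, isolate the linear-in-$\eps$ coefficient (the quadratic remainder being finite by the $L^2$ bounds of Lemma \ref{lemma:finiteness}), and apply Fubini to turn $\E[\int_0^T\widetilde{Q}_t\,F_t\,\diff t]$ into the inner integral $\int_t^T F_u\,\diff u$. Your relabelled $F_u$ (with $\instantB\,\nuB_u-\decayB\,Y_u$ carrying the subscript $u$) is in fact the correct output of the Fubini swap, consistent with the definition of $\chi_u^{\nuIstar}$ used later in Theorem \ref{thm: FBSDE Inf opt}.
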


\begin{proof} 
    Let $\nuI,\nuI' \in \mcAI$ and $\nuIdir:=(\nuI'-\nuI)$, let $\nuB\in\mcAB$, and let $\eps>0$. Next,  for $t\in\mfT$ we have
\begin{equation}
\QI[I,\nuI+\eps\,\nuIdir]_t = \QI[I,\nuI]_t + \eps\,\int_0^t \nuIdir_u \,\diff u
\end{equation}
and \allowdisplaybreaks
\begin{align*}
 \pcI(\nuB,\nuI+\eps\,\nuIdir) - \pcI(\nuB,\nuI) & =\E\Bigg[
    \int_0^T \bigg\{-2\,\tempI\,\eps\,\nuI_t\,\nuIdir_t  -\tempI\,\eps^2\,\nuIdir^2_t  + \eps\,\int_0^t\nuIdir_u\,\diff u\,(\alpha_t+ \instantB\,\nuB_t-\decayB\,Y_t ) \\ 
&\qquad \qquad -2\,\termpI\,\left(\QI[I,\nuI]_t\,\eps\,\nuIdir_t + \eps\,\int_0^t\nuIdir_u\,\diff u\,\,\nuI_t +\eps^2\,\nuIdir_t\,\int_0^t\nuIdir_u\,\diff u\right) \\
&\qquad \qquad\qquad  -\runnI\,\left(2\,\eps\,\QI[I,\nuI]_t\,\int_0^t\nuIdir_u\,\diff u + \eps^2\,\left(\int_0^t\nuIdir_u\,\diff u\right)^2 \right) \bigg\}\,\diff t 
\Bigg] \\
&\quad  = \eps \,\E\Bigg[\int_0^T \bigg\{
-2\,\tempI\,\nuI_t\,\nuIdir_t   + \int_0^t\nuIdir_u\,\diff u\,(\alpha_t+  \instantB\,\nuB_t-\decayB\,Y_t )\\
&\qquad \qquad -2\,\termpI\,\left(\QI[I,\nuI]_t\,\nuIdir_t + \int_0^t\nuIdir_u\,\diff u\,\nuI_t \right) -2\,\runnI\,\QI[I,\nuI]_t\,\int_0^t\nuIdir_u\,\diff u 
\bigg\}\,\diff t
\Bigg] + \mathcal{O}\left(\eps^2\right)\,.
\end{align*}
The coefficient for $\eps^2$ is finite by Lemma \ref{lemma:finiteness}.
It then follows that 
\begin{align}
    \langle \mathcal{D}\,\pcI(\nuB,\nuI),\nuIdir\rangle &= \E\Bigg[\int_0^T \bigg\{
-2\,\tempI\,\nuI_t\,\nuIdir_t   + \int_0^t\nuIdir_u\,\diff u\,\left(\alpha_t+ \instantB\,\nuB_t-\decayB\,Y_t  -2\,\termpI\,\nuI_t -2\,\runnI\,\QI[I,\nuI]_t
\right) -2\,\termpI\, \QI[I,\nuI]_t\,\nuIdir_t 
\bigg\}\,\diff t
\Bigg]\nonumber\\
& = \E\Bigg[\int_0^T \nuIdir_t\,\bigg\{ 
-2\,\tempI\,\nuI_t  -2\,\termpI\, \QI[I,\nuI]_t + 
\int_t^T\left(\alpha_u+  \instantB\,\nuB_t-\decayB\,Y_t -2\,\termpI\,\nuI_u -2\,\runnI\,\QI[I,\nuI]_u
\right)\diff u 
\bigg\}\,\diff t
\Bigg]\,,
\end{align}
where the last line is a consequence of Fubini's theorem.
\end{proof}

\begin{proposition}[G\^ateaux derivative of broker's functional]\label{prop: Gateaux pcB}
Let $\nuB,\nuB'\in\mcAB$ and $\nuI\in\mcAI$. The G\^ateaux derivative of $\pcB$ at $\nuB$ in the direction of $\nuBdir:=(\nuB'-\nuB)$ is given by
\begin{align}
    \langle \mathcal{D}\,\pcB(\nuB,\nuI),\nuBdir\rangle =  \E\Bigg[
    & \int_0^T \nuBdir_t\,\bigg\{-2\,\tempB\,\nuB_t   - (2\,\termpB-\instantB)\,\QB[B,\nuB]_t  - \decayB\,\instantB\,\int_t^T \QB[B,\nuB]_u \,e^{-\decayB\,(u-t)}\,\diff u \\
    &\quad  +  \int_t^T \left(  \alpha_u
    - (2\,\termpB-\instantB)\,\nuB_u
    -\decayB\,Y_u +2\,\termpB\,\left(\nuI_u+\nuU_u\right)
    -2\,\runnB\,\QB[B,\nuB]_u    
    \right)\diff u  
 \bigg\}\,\diff t
 \Bigg]\,.\nonumber
\end{align}
\end{proposition}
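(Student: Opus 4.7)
The plan is to mimic the argument used in Proposition \ref{prop: Gateaux pcI}, but now accounting for the fact that the perturbation in $\nuB$ propagates through \emph{two} auxiliary processes: the broker's inventory $\QB[B,\nuB]$ and the transient impact process $Y$. First, I would exploit linearity of the SDEs \eqref{eq: inventory of broker} and \eqref{eq: transient} in $\nuB$ to write
\begin{equation}
\QB[B,\nuB+\eps\,\nuBdir]_t = \QB[B,\nuB]_t + \eps\,\int_0^t \nuBdir_u\,\diff u\,,
\qquad
Y^{\nuB+\eps\,\nuBdir}_t = Y^\nuB_t + \eps\,\instantB\,\int_0^t e^{-\decayB\,(t-s)}\,\nuBdir_s\,\diff s\,.
\end{equation}

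Next, I would substitute these expressions into the representation \eqref{eq: repres pcB}, expand $\pcB(\nuB+\eps\,\nuBdir,\nuI)-\pcB(\nuB,\nuI)$ in powers of $\eps$, and keep only the linear part; the quadratic remainder is $\mathcal{O}(\eps^2)$ uniformly, since each of its coefficients involves a finite combination of the square-integrable quantities controlled by Lemma \ref{lemma:finiteness}, so it vanishes in the limit $\eps\to 0$. Collecting the seven families of linear terms (one from $-\tempB\,\nuB_t^2$, two each from $\QB_t\,\instantB\,\nuB_t$, $\QB_t\,\decayB\,Y_t$ and $\QB_t\,2\,\termpB\,(\nuB_t-\nuI_t-\nuU_t)$, and one each from the $\alpha_t\,\QB_t$ and $-\runnB\,\QB_t^2$ terms) yields the directional derivative expressed as an expectation of an integral over $[0,T]$ in which some terms are multiplied by $\nuBdir_t$ directly while others are multiplied by $\int_0^t \nuBdir_u\,\diff u$ or by $\instantB\int_0^t e^{-\decayB(t-s)}\nuBdir_s\,\diff s$.

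The final step is Fubini's theorem, applied to recast every contribution as an integrand of the form $\nuBdir_t\{\cdots\}\,\diff t$. For a generic bounded $g$, swapping the order of integration gives
\begin{equation}
\E\!\left[\int_0^T\!\left(\int_0^t \nuBdir_u\,\diff u\right) g_t\,\diff t\right]
= \E\!\left[\int_0^T \nuBdir_t \int_t^T g_u\,\diff u\,\diff t\right]\,,
\end{equation}
which converts the $(2\,\termpB-\instantB)$, $\decayB\,Y$, $\alpha$, $\termpB(\nuI+\nuU)$ and $\runnB\,\QB$ contributions into the integrals from $t$ to $T$ appearing in the statement. For the transient-impact contribution I would use the same trick with kernel $e^{-\decayB(u-t)}$:
\begin{equation}
\E\!\left[\int_0^T \QB[B,\nuB]_t\,\decayB\,\instantB\int_0^t e^{-\decayB(t-s)}\,\nuBdir_s\,\diff s\,\diff t\right]
= \E\!\left[\int_0^T \nuBdir_t\,\decayB\,\instantB\int_t^T \QB[B,\nuB]_u\,e^{-\decayB(u-t)}\,\diff u\,\diff t\right]\,,
\end{equation}
which produces the distinctive extra term that was not present in Proposition \ref{prop: Gateaux pcI}. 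Fubini is justified because $\nuB,\nuI,\nuU,\alpha,\QB[B,\nuB],Y$ are all in $L^2(\Pb\otimes\diff t)$ by Lemma \ref{lemma:finiteness}, so every integrand above is $L^1$. The only real obstacle is bookkeeping — keeping track of signs and of which perturbations land on $\QB$, on $Y$, or on $\nuB$ itself — the analysis beyond that is a direct expansion plus Fubini, exactly parallel to the informed trader's case but with the exponential-kernel term added.
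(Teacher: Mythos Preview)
Your proposal is correct and follows essentially the same route as the paper's own proof: linearity of $\QB[B,\cdot]$ and $Y^{\cdot}$ in $\nuB$, substitution into the representation \eqref{eq: repres pcB}, expansion in $\eps$ with the $\mathcal{O}(\eps^2)$ remainder controlled by Lemma \ref{lemma:finiteness}, and finally Fubini (including the exponential-kernel swap) to factor out $\nuBdir_t$. The only cosmetic difference is how the linear terms are grouped before Fubini is applied.
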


\begin{proof}
Let $\nuI \in \mcAI$, let $\nuB,\nuB'\in\mcAB$ and $\nuBdir:=(\nuB'-\nuB)$, and let $\eps>0$. Next, for $t\in\mfT$ we have
\begin{equation}
\QB[B,\nuB+\eps\,\nuBdir]_t = \QB[B,\nuB]_t + \eps\,\int_0^t \nuBdir_u\, \diff u\,,
\end{equation}
and 
\begin{equation}
Y^{\nuB+\eps\,\nuBdir}_t = Y^\nuB_t + \eps\,\instantB\,\int_0^t e^{-\decayB\,(t-u)}\,\nuBdir_u\, \diff u\,,
\end{equation}
therefore,
\begin{align*}
\pcB(\nuB+\eps\,\nuBdir,\nuI) - \pcB(\nuB,\nuI)  &=\eps\,\E\Bigg[
    \int_0^T \Bigg\{-2\,\tempB\,\nuB_t\,\nuBdir_t + \alpha_t\,\int_0^t \nuBdir_u \diff u\\
    & \qquad\qquad - (2\,\termpB-\instantB)\,\left( \QB[B,\nuB]_t\,\nuBdir_t  + \nuB_t\,\int_0^t\nuBdir_u\,\diff u \right)\\
    &\qquad \qquad - \decayB\,\instantB\,\QB[B,\nuB]_t\,\int_0^t e^{-\decayB\,(t-s)}\,\nuBdir_s\,\diff s - \decayB\,Y_t\,\int_0^t\nuBdir_s\,\diff s \\
    &\qquad \qquad + 2\,\termpB\,\left(\nuI_t+\nuU_t\right)\,\int_0^t\nuBdir_u\,\diff u -2\,\runnB\,\QB[B,\nuB]_t\,\int_0^t\nuBdir_u\,\diff u\bigg\}\,\diff t 
    \Bigg] + \mathcal{O}\left(\eps^2\right)\,.
\end{align*}
After applying Fubini's theorem, it follows that 
\begin{align}
\langle \mathcal{D}\,\pcB(\nuB,\nuI),\nuBdir\rangle & = 
\E\Bigg[
    \int_0^T \nuBdir_t\,\bigg\{-2\,\tempB\,\nuB_t   - (2\,\termpB-\instantB)\,\QB[B,\nuB]_t  - \decayB\,\instantB\,\int_t^T \QB[B,\nuB]_u \,e^{-\decayB\,(u-t)}\,\diff u 
    \\
    &\hspace*{6.5em}  +  \int_t^T \left(  \alpha_u
    - (2\,\termpB-\instantB)\,\nuB_u
    -\decayB\,Y_u +2\,\termpB\,\left(\nuI_u+\nuU_u\right)
    -2\,\runnB\,\QB[B,\nuB]_u    
    \right)\diff u  
 \bigg\}\,\diff t
 \Bigg]\,,\nonumber
\end{align}
which concludes the proof.
\end{proof}

The following theorem characterises the best response of the informed trader as the solution to an FBSDE.

\begin{theorem}\label{thm: FBSDE Inf opt}
Let $\nuB\in\mcAB$. A control $\nuIstar\in\mcAI$ maximises the functional $\nuI\to \pcI(\nuB,\nuI)$ if and only if 
\begin{equation}
\begin{cases}\label{eq: FBSDE 1 Informed}
    & \diff \nuIstar_t = -\frac{1}{2\,\tempI}\left(\alpha_t - 2\,\runnI\,\QI[I,\nuIstar]_t + \instantB\,\nuB_t - \decayB\, Y_t\right)\diff t + \frac{1}{2\,\tempI}\,\diff \MI_t \,,
    \\[0.5em]
    & \nuIstar_T = - \frac{\termpI}{\tempI}\,\QI[I,\nuIstar]_T\,,
\end{cases}
\end{equation}
for a suitable $\mcFI$-martingale  $(\MI_t)_{t\in\mfT}$.
\end{theorem}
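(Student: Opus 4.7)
The plan is to leverage the strict concavity of $\nuI\mapsto \pcI(\nuB,\nuI)$ established in Proposition \ref{prop: pcI concave}, which reduces optimality to a vanishing first-order condition: $\nuIstar$ maximises $\pcI(\nuB,\cdot\,)$ if and only if $\langle \mathcal{D}\pcI(\nuB,\nuIstar),\nuIdir\rangle=0$ for every $\nuIdir\in\mcAI$. Using Proposition \ref{prop: Gateaux pcI}, this is equivalent to $\E\left[\int_0^T \nuIdir_t\,\Phi_t\,\diff t\right]=0$ for all $\nuIdir\in\H$, where
\begin{equation}
\Phi_t := -2\,\tempI\,\nuIstar_t - 2\,\termpI\,\QI_t + \int_t^T\left(\alpha_u + \instantB\,\nuB_u - \decayB\,Y_u - 2\,\termpI\,\nuIstar_u - 2\,\runnI\,\QI_u\right)\diff u\,.
\end{equation}

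The first step is to upgrade this weak identity to the pointwise condition $\E_t[\Phi_t]=0$ for $\mu$-a.e.\ $(t,\omega)$. I would do this by choosing, up to a truncation that ensures $\H$-admissibility via the $L^2$-bounds of Lemma \ref{lemma:finiteness}, test directions of the form $\nuIdir_s=\mathbf{1}_A(s,\omega)\,\E_s[\Phi_s]$ on progressively measurable sets $A$. Since $\nuIstar_t$ and $\QI_t$ are $\mcF_t$-measurable, taking conditional expectations yields
\begin{equation}
2\,\tempI\,\nuIstar_t + 2\,\termpI\,\QI_t = P_t\,,\qquad P_t := \E_t\!\left[\int_t^T\!\!\left(\alpha_u + \instantB\,\nuB_u - \decayB\,Y_u - 2\,\termpI\,\nuIstar_u - 2\,\runnI\,\QI_u\right)\diff u\right],
\end{equation}
and $P_T=0$ delivers the terminal condition $\nuIstar_T=-(\termpI/\tempI)\,\QI_T$.

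The second step is to recognise $P$ as a BSDE: by construction $P_t+\int_0^t\left(\alpha_u+\instantB\,\nuB_u-\decayB\,Y_u-2\,\termpI\,\nuIstar_u-2\,\runnI\,\QI_u\right)\diff u$ is an $(\mcF_t)$-martingale $N$, so differentiating the identity $2\,\tempI\,\nuIstar_t = P_t - 2\,\termpI\,\QI_t$ and using $\diff\QI_t=\nuIstar_t\,\diff t$ cancels the $2\,\termpI\,\nuIstar_t$ terms and yields precisely \eqref{eq: FBSDE 1 Informed} with $\MI:=N/(2\,\tempI)$. The converse direction runs the same computation backwards: starting from \eqref{eq: FBSDE 1 Informed}, the process $P_t := 2\,\tempI\,\nuIstar_t + 2\,\termpI\,\QI_t$ has the correct BSDE dynamics with $P_T=0$, hence coincides with the conditional expectation above, making $\Phi$ mean-zero conditional on $\mcF_t$; the G\^ateaux derivative vanishes in every admissible direction and strict concavity promotes $\nuIstar$ to the unique maximiser. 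The main obstacle is the first step — justifying the pointwise reduction from $\E[\int_0^T \nuIdir_t\,\Phi_t\,\diff t]=0$ for all $\nuIdir\in\H$ to $\E_t[\Phi_t]=0$ — which requires verifying that the candidate test directions can be truncated so as to lie in $\H$, and this in turn relies on the square-integrability of $\Phi$ itself, which follows from Lemma \ref{lemma:finiteness} and the square-integrability of $\alpha$, $\nuB$, and $Y$.
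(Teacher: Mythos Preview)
Your proposal is correct and follows essentially the same route as the paper: reduce to the first-order condition via strict concavity, pass from the weak identity $\E\big[\int_0^T\nuIdir_t\,\Phi_t\,\diff t\big]=0$ to the pointwise condition, then define the martingale $\MI$ from the conditional expectation of $\int_0^T\chi_u\,\diff u$ and differentiate to obtain \eqref{eq: FBSDE 1 Informed}. The only cosmetic difference is that the paper first uses the tower property to replace $\Phi_t$ by its $\mcF_t$-conditional expectation inside the integral (making the integrand adapted so that the reduction ``this is only possible if $\ldots=0$'' is immediate), whereas you carry out that step via explicit test directions --- both arguments arrive at the same representation $2\,\tempI\,\nuIstar_t = -2\,\termpI\,\QI_t + \E_t\!\big[\int_t^T\chi_u\,\diff u\big]$ and the same martingale.
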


\begin{proof}
Let $\nuB\in\mcAB$. By Proposition \ref{prop: pcI concave}, $\nuI\to \pcI(\nuB,\nuI)$ is a strictly concave functional  and by Proposition \ref{prop: Gateaux pcI} it follows from \cite{ekeland1999convex} (Proposition 2.1 in Chapter 2) that  $\nuIstar\in \mcAI$ maximises the functional $\pcI$ if and only if 
\begin{equation}\label{eq: gateaux zero condition}
     \langle \mathcal{D}\,\pcI(\nuB,\nuIstar),(\nuI'-\nuIstar)\rangle = 0\,,\quad \forall \;\nuI'\in\mcAI\,.
\end{equation}
Next, we show that this happens if and only if $\nuIstar$ satisfies the FBSDE in \eqref{eq: FBSDE 1 Informed}.
\textbf{Necessity:}  Let $\nuIstar\in\mcAI$ maximise $\nuI\to\pcI(\nuB,\nuI)$, so \eqref{eq: gateaux zero condition} holds. Then, for  $\nuI'\in\mcAI$ arbitrary, from Proposition \ref{prop: Gateaux pcI} we  have
\begin{align}
\E\Bigg[\int_0^T \nuIdir_t\,\bigg\{ 
-2\,\tempI\,\nuIstar_t  -2\,\termpI\, \QI[I,\nuIstar]_t + \E\bigg[\int_t^T\chi_u^{\nuIstar}\diff u \,\bigg|\,\mcFI_t\bigg]
\bigg\}\,\diff t
\Bigg]=0\,,
\end{align}
where $\nuIdir:=(\nuI'-\nuI^*)$ and where $\chi_u^{\nuIstar}:=(\alpha_u
 + \instantB\,\nuB_u - \decayB\,{Y}_u  -2\,\termpI\,\nuIstar_u -2\,\runnI\,\QI[I,\nuIstar]_u)$.
This is only possible if 
\begin{align}
 2\,\tempI\,\nuIstar_t &=  
-2\,\termpI\, \QI[I,\nuIstar]_t + \E\bigg[\int_t^T\chi_u^{\nuIstar}\diff u \,\bigg|\,\mcFI_t\bigg]\,.
\end{align}
Next, define the $\mcFI$-martingale $(\MI_t)_{t\in\mfT}$ by
\begin{equation}
\MI_t := \E\bigg[\int_0^T\chi_u^{\nuIstar}\diff u \,\bigg|\,\mcFI_t\bigg]\,,
\end{equation}
for $t\in\mfT$. As $\alpha, Y,\nuB,\nuIstar,\QI[I,\nuIstar]$ are square-integrable and $\chi_u^{\nuIstar}$ is a linear combination of square-integrable adapted processes, we have that 
\begin{equation} 
\E\left[ \left| \int_0^T\chi_u^{\nuIstar}\diff u\right|\; \right]
\le\left(\E\left[\left(\int_0^T\chi_u^{\nuIstar} \,\diff u\right)^2\right]\right)^{\tfrac12}
\le 
\left( T\;\E\left[\int_0^T(\chi_u^{\nuIstar})^2 \,\diff u\right]\right)^{\tfrac12}
<+\infty   \,,
\end{equation}
hence, $(\MI_t)_{t\in\mfT}$ is a martingale. Thus
we obtain the representation
\begin{align}
 2\,\tempI\,\nuIstar_t &=  
-2\,\termpI\, \QI[I,\nuIstar]_t - \int_0^t\chi_u^{\nuIstar}\diff u + \MI_t\,,
\end{align}
and it follows that 
\begin{align}
 2\,\tempI\,\diff \nuIstar_t &=  -2\,\termpI\, \diff \QI[I,\nuIstar]_t -\chi_t^{\nuIstar}\diff t + \diff \MI_t\\
&=-\left(\alpha_t
 + \instantB\,\nuB_t - \decayB\,{Y}_t  -2\,\runnI\,\QI[I,\nuIstar]_t
\right)\diff t + \diff \MI_t\,.
\end{align}
\textbf{Sufficiency:} Let $\nuIstar\in \mcAI$ satisfy \eqref{eq: FBSDE 1 Informed}. The solution to the FBSDE can be written as
\begin{align}
 \nuIstar_t &=  -\frac{\termpI}{\tempI}\, \QI[I,\nuIstar]_t + \frac{1}{2\,\tempI}\E\bigg[\int_t^T \chi_u^{\nuIstar}\,\diff u \,\bigg|\,\mcFI_t\bigg]\,.
\end{align}
Then, for $\nuIdir:=(\nuI'-\nuI^*)$ with $\nuI'\in\mcAI$ arbitrary,
\allowdisplaybreaks
\begin{align*}
    \langle \mathcal{D}\,\pcI(\nuB,\nuIstar),\nuIdir\rangle &=  \E\left[\int_0^T \nuIdir_t \left\{ -2\,\tempI\,\nuIstar_t - 2\,\termpI\,\QI[I,\nuIstar]_t + \int_t^T \chi_u^{\nuIstar} \diff u\right\} \diff t\right]\nonumber\\
    &=  \E\Bigg[\int_0^T \nuIdir_t \bigg\{ 
    - \E\bigg[\int_t^T\chi_u^{\nuIstar}\diff u \,\bigg|\,\mcFI_t\bigg] + \int_t^T \chi_u^{\nuIstar} \diff u
    \bigg\}\,\diff t\Bigg]
    \\
 &=  \E\Bigg[\int_0^T \nuIdir_t \bigg\{ 
    - \E\bigg[\int_0^T \chi_u^{\nuIstar} \diff u \,\bigg|\,\mcFI_t\bigg]
    + \int_0^T \chi_u^{\nuIstar} \diff u
    \bigg\}\,\diff t\Bigg]\\    
&=  \E\Bigg[\int_0^T \nuIdir_t \bigg\{ 
    - \MI_t + \MI_T \bigg\}\,\diff t\Bigg] =  \E\Bigg[\int_0^T \nuIdir_t \bigg\{ 
    \E\bigg[- \MI_t + \MI_T \,\Big|\,\mcFI_t\bigg] \bigg\}\,\diff t\Bigg] = 0\,.
\end{align*}
Thus, $\nuIstar$ is optimal because
\begin{equation}
    \langle \mathcal{D}\,\pcI(\nuB,\nuIstar),(\nuI'-\nuI^*)\rangle = 0\,,\qquad \forall \nuI'\in\mcAI \qquad \iff \quad \nuIstar = \argmax_{\nuI\in\mcAI} \pcI(\nuI,\nuBstar)\,.
\end{equation}
\end{proof}

Similar to the theorem above, the theorem below characterises the best response of the broker  as the solution to an FBSDE.

\begin{theorem}\label{thm: FBSDE Broker opt}
Let $\nuI\in\mcAI$. A control $\nuBstar\in\mcAB$ maximises the functional $\nuB\to \pcB(\nuB,\nuI)$ if and only if
\begin{equation}\label{eq: FBSDE 2 Broker}
\begin{cases}
    &\diff \nuBstar_t  = - \frac{1}{2\,\tempB} \left(  \alpha_t
    - \left(2\,\runnB + \decayB\,\instantB \right)\QB[B,\nuBstar]_t    
    -\decayB\,Y_t 
    + \instantB\,\left(\nuI_t + \nuU_t\right)
    + \decayB^2\,\instantB\,Z_t 
    \right)\diff t 
    + \frac{1}{2\,\tempB} \diff \MBb_t - \frac{\decayB\,\instantB}{2\,\tempB}e^{\decayB\,t}\,\diff \MBa_t,
    \\[0.5em]
    & \diff Z_t = \left( \decayB\,Z_t  - \QB[B,\nuBstar]_t\right)\diff t + e^{\decayB\,t}\diff \MBa_t\,, 
    \\[0.5em]
    & Z_T = 0 \,,
    \\[0.5em]
    & \nuBstar_T = - \frac{2\,\termpB - \instantB}{2\,\tempB}\,\QB[B,\nuBstar]_T\,,
\end{cases}
\end{equation}
for suitable $\mcFB$-martingales  $(\MBa_t)_{t\in\mfT}$ and $(\MBb_t)_{t\in\mfT}$.
\end{theorem}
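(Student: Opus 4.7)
The plan is to mirror the two-step argument of Theorem \ref{thm: FBSDE Inf opt}. By strict concavity of $\nuB\mapsto\pcB(\nuB,\nuI)$ from Proposition \ref{prop: pcB concave} and the closed form of $\langle \mathcal{D}\,\pcB(\nuB,\nuI),\nuBdir\rangle$ in Proposition \ref{prop: Gateaux pcB}, the Ekeland-Temam first-order principle invoked in the proof of Theorem \ref{thm: FBSDE Inf opt} gives that $\nuBstar\in\mcAB$ is optimal iff $\langle \mathcal{D}\,\pcB(\nuBstar,\nuI),\nuBdir\rangle = 0$ for every direction $\nuBdir = \nuB'-\nuBstar$ with $\nuB'\in\mcAB$. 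Since the bracket in the Gâteaux derivative is paired linearly against $\nuBdir$ and the admissible directions range over all of $\H$, conditioning on $\mcF_t$ reduces this stationarity to the pointwise identity
\begin{equation*}
2\,\tempB\,\nuBstar_t + (2\,\termpB-\instantB)\,\QB[B,\nuBstar]_t = -\decayB\,\instantB\,\E\!\left[\int_t^T \QB[B,\nuBstar]_u\,e^{-\decayB(u-t)}\,\diff u \,\Big|\, \mcF_t\right] + \E\!\left[\int_t^T \chi_u\,\diff u \,\Big|\, \mcF_t\right],
\end{equation*}
with $\chi_u := \alpha_u - (2\,\termpB-\instantB)\,\nuBstar_u - \decayB\,Y_u + 2\,\termpB\,(\nuI_u+\nuU_u) - 2\,\runnB\,\QB[B,\nuBstar]_u$.

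For necessity I encode the two conditional expectations as a forward--backward auxiliary state and a martingale. Setting
\begin{equation*}
Z_t := \E\!\left[\int_t^T \QB[B,\nuBstar]_u\,e^{-\decayB(u-t)}\,\diff u \,\Big|\, \mcF_t\right] \qquad\text{and}\qquad \MBa_t := \E\!\left[\int_0^T e^{-\decayB u}\,\QB[B,\nuBstar]_u\,\diff u \,\Big|\, \mcF_t\right],
\end{equation*}
the decomposition $e^{-\decayB t}\,Z_t = \MBa_t - \int_0^t e^{-\decayB u}\,\QB[B,\nuBstar]_u\,\diff u$ and the product rule deliver $\diff Z_t = (\decayB\,Z_t - \QB[B,\nuBstar]_t)\,\diff t + e^{\decayB t}\,\diff \MBa_t$ with $Z_T = 0$, i.e.\ the middle two lines of \eqref{eq: FBSDE 2 Broker}. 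Similarly, $\MBb_t := \E[\int_0^T \chi_u\,\diff u \mid \mcF_t]$ is a square-integrable $\mcF$-martingale by Lemma \ref{lemma:finiteness}, and the identity rewrites as $2\,\tempB\,\nuBstar_t = -(2\,\termpB-\instantB)\,\QB[B,\nuBstar]_t - \decayB\,\instantB\,Z_t + \MBb_t - \int_0^t \chi_u\,\diff u$. Differentiating both sides and substituting the dynamics of $\QB[B,\nuBstar]$, $Z$, and the definition of $\chi_t$ produces, after the $\nuBstar_t$- and $(\nuI_t+\nuU_t)$-blocks cancel, the SDE in \eqref{eq: FBSDE 2 Broker}; the terminal condition on $\nuBstar$ follows by evaluating the identity at $t=T$.

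For sufficiency I reverse the argument: given a solution to \eqref{eq: FBSDE 2 Broker}, the BSDE for $Z$ together with $Z_T=0$ and the martingale property of $\MBa$ recovers $Z_t = \E[\int_t^T \QB[B,\nuBstar]_u\,e^{-\decayB(u-t)}\,\diff u \mid \mcF_t]$. Integrating the $\nuBstar$-SDE from $t$ to $T$ against the terminal condition, taking $\mcF_t$-conditional expectations (which annihilates both martingale integrals), and using the integrated form of the $Z$-equation to convert $\int_t^T \decayB^2\,\instantB\,Z_u\,\diff u$ back into $-\decayB\,\instantB\,Z_t + \decayB\,\instantB\int_t^T \QB[B,\nuBstar]_u\,\diff u$ reconstructs the first-order identity above, which, read through Fubini, says $\langle \mathcal{D}\,\pcB(\nuBstar,\nuI),\nuBdir\rangle = 0$ for every $\nuBdir\in\H$. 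Strict concavity from Proposition \ref{prop: pcB concave} then lifts stationarity to global optimality.

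The chief obstacle, relative to Theorem \ref{thm: FBSDE Inf opt}, is the tail integral $\int_t^T \QB[B,\nuBstar]_u\,e^{-\decayB(u-t)}\,\diff u$ contributed by transient impact: unlike the informed-trader case, it cannot be absorbed by a single martingale-representation step and must be lifted into a genuine auxiliary state $Z_t$ carrying its own backward equation and martingale driver $\MBa$. This is why \eqref{eq: FBSDE 2 Broker} is a \emph{coupled} system with two martingale drivers, and why the drift of $\nuBstar$ contains the additional $\decayB^2\,\instantB\,Z_t$ term together with the compensating $-\tfrac{\decayB\,\instantB}{2\,\tempB}\,e^{\decayB t}\,\diff \MBa_t$ term that emerges from differentiating the $-\decayB\,\instantB\,Z_t$ contribution in the $\nuBstar$-identity.
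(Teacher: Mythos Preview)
Your proposal is correct and follows essentially the same route as the paper's proof: both use strict concavity (Proposition~\ref{prop: pcB concave}) together with the G\^ateaux derivative (Proposition~\ref{prop: Gateaux pcB}) to reduce optimality to a pointwise first-order identity, then introduce the same two martingales $\MBa,\MBb$ and the auxiliary backward process $Z_t=\E\big[\int_t^T \QB[B,\nuBstar]_u e^{-\decayB(u-t)}\,\diff u\,\big|\,\mcF_t\big]$ to recast the identity as the FBSDE \eqref{eq: FBSDE 2 Broker}. The only cosmetic difference is in the sufficiency direction, where the paper plugs the integral representation of $\nuBstar$ directly into the G\^ateaux derivative and cancels martingale increments, while you integrate the $\nuBstar$-SDE from $t$ to $T$ and use the integrated $Z$-equation; both arguments are equivalent.
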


\begin{proof}
The proof is similar to that of Theorem \ref{thm: FBSDE Inf opt}.
Let $\nuI\in\mcAI$. By Proposition \ref{prop: pcB concave}, $\nuB\to \pcB(\nuB,\nuI)$ is a strictly concave functional and as before, by Proposition \ref{prop: Gateaux pcB}, it follows  that  $\nuBstar\in \mcAB$ maximises the functional $\pcB$ if and only if 
\begin{equation}\label{eq: gateaux B zero condition}
     \langle \mathcal{D}\,\pcB(\nuBstar,\nuI), (\nuB'-\nuB^*)\rangle = 0\,,\quad \forall \nuB'\in\mcAB\,.
\end{equation}
Next, we show that this happens if and only if $\nuBstar$ satisfies the FBSDE in \eqref{eq: FBSDE 2 Broker}.
\textbf{Necessity:}  Let $\nuBstar\in\mcAB$ maximise $\nuB\to\pcB(\nuB,\nuI)$, then, it follows that  \eqref{eq: gateaux B zero condition} holds. Take $\nuB'\in\mcAB$ arbitrary, then from Proposition \ref{prop: Gateaux pcB}, we have 
\begin{align}
0 &= \E\Bigg[
    \int_0^T \nuBdir_t\,\bigg\{-2\,\tempB\,\nuBstar_t   - (2\,\termpB-\instantB)\,\QB[B,\nuBstar]_t  - \decayB\,\instantB\,\int_t^T \QB[B,\nuBstar]_u \,e^{-\decayB\,(u-t)}\,\diff u   +  \int_t^T \chi_u^{\nuBstar}\diff u  
 \bigg\}\,\diff t
 \Bigg]\,,\nonumber
\end{align}
where $\nuBdir:=(\nuB'-\nuB^*)$ and $\chi_u^{\nuBstar}:=\left(  \alpha_u
    - (2\,\termpB-\instantB)\,\nuBstar_u
    -\decayB\,Y_u +2\,\termpB\,\left(\nuI_u+\nuU_u\right)
    -2\,\runnB\,\QB[B,\nuBstar]_u    
    \right)$. 
    This is only possible if 
\begin{align}\label{eq: aux necessity broker}
2\,\tempB\,\nuBstar_t  &= - (2\,\termpB-\instantB)\,\QB[B,\nuBstar]_t  - \decayB\,\instantB\,e^{\decayB\,t}\,\E\left[\int_t^T \QB[B,\nuBstar]_u \,e^{-\decayB\,u}\,\diff u \,\mid\,\mcFB_t \right]  + \E\left[ \int_t^T \chi_u^{\nuBstar}\diff u    \,\mid\,\mcFB_t \right]\,.\nonumber
\end{align}
Hence, define the $\mcFB$-martingales $(\MBa_t)_{t\in\mfT}$ and $(\MBb_t)_{t\in\mfT}$ s.t.,
\begin{align}
\MBa_t := \E\left[\int_0^T \QB[B,\nuBstar]_u \,e^{-\decayB\,u}\,\diff u \,\mid\,\mcFB_t \right]
\qquad \text{and} \qquad
\MBb_t  := \E\left[\int_0^T\chi_u^{\nuBstar}\diff u    \,\mid\,\mcFB_t \right]\,,
\end{align}
for $t\in\mfT$. Similar to the proof that $\MI$ is a martingale in Theorem \ref{thm: FBSDE Inf opt}, $\MBa$ and $\MBb$ are martingales because $\QB[B,\nuBstar]$ and $\chi_u^{\nuBstar}$ are square-integrable. In particular, we have that 
\begin{equation}
\E\left[\left|\int_0^T\QB[B,\nuBstar]_u \,e^{-\decayB\,u}\diff u\right|\right]<+\infty
\qquad \text{and} \qquad 
\E\left[\left|\int_0^T\chi_u^{\nuBstar}\diff u\right|\right]<+\infty   \,.
\end{equation}

Hence, we obtain the representation 
\begin{align}
2\,\tempB\,\nuBstar_t  &= - (2\,\termpB-\instantB)\,\QB[B,\nuBstar]_t  - \decayB\,\instantB\,e^{\decayB\,t}\,\left(\MBa_t - \int_0^t \QB[B,\nuBstar]_u \,e^{-\decayB\,u}\,\diff u  \right)  - \int_0^t \chi_u^{\nuBstar}\diff u  + \MBb_t\,.
\end{align}
Next, define the auxiliary process 
\begin{equation}
    Z_t = e^{\decayB\,t}\,\left(\MBa_t - \int_0^t \QB[B,\nuBstar]_u \,e^{-\decayB\,u}\,\diff u  \right)\,,
\end{equation}
which satisfies the BSDE
\begin{equation}
    \diff Z_t = \left( \decayB\,Z_t  - \QB[B,\nuBstar]_t\right)\diff t + e^{\decayB\,t}\diff \MBa_t\,,\qquad Z_T =0\,.
\end{equation}
Thus, 
\begin{align}
2\,\tempB\,\diff \nuBstar_t  &= - (2\,\termpB-\instantB)\,\left(\nuBstar_t - \nuI_t - \nuU_t\right)\diff t  - \decayB\,\instantB\,\left[\left( \decayB\,Z_t  - \QB[B,\nuBstar]_t\right)\diff t + e^{\decayB\,t}\diff \MBa_t\right]   - \chi_t^{\nuBstar}\diff t  + \diff \MBb_t\,,
\end{align}
which simplifies to 
\begin{align}
\diff \nuBstar_t  &= - \frac{1}{2\,\tempB} \left(  \alpha_t
    - \left(2\,\runnB + \decayB\,\instantB \right)\QB[B,\nuBstar]_t    
    -\decayB\,Y_t 
    + \instantB\,\left(\nuI_t + \nuU_t\right)
    + \decayB^2\,\instantB\,Z_t 
    \right)\diff t  + \frac{1}{2\,\tempB} \diff \MBb_t - \frac{\decayB\,\instantB}{2\,\tempB}e^{\decayB\,t}\diff \MBa_t\,,
\end{align}
and together with \eqref{eq: aux necessity broker} evaluated at $T$ show that the FBSDE system is satisfied.
\textbf{Sufficiency:} Let $\nuBstar\in \mcAB$ satisfy \eqref{eq: FBSDE 2 Broker}. The  solution to the FBSDE can be written as
\begin{align}
\nuBstar_t  &= - \frac{(2\,\termpB-\instantB)}{2\,\tempB}\,\QB[B,\nuBstar]_t  - \frac{\decayB\,\instantB\,e^{\decayB\,t}}{2\,\tempB}\,\E\left[\int_t^T \QB[B,\nuBstar]_u \,e^{-\decayB\,u}\,\diff u \,\mid\,\mcFB_t \right] + \frac{1}{2\,\tempB} \E\left[ \int_t^T \chi_u^{\nuBstar}\diff u    \,\mid\,\mcFB_t \right]\,,\nonumber
\end{align}
which,  using Proposition  \ref{prop: Gateaux pcB} yields,
\begin{align*}
    \langle \mathcal{D}\,\pcB(\nuBstar,\nuI),\nuBdir\rangle &=  \E\Bigg[\int_0^T \nuBdir_t \bigg\{ \decayB\,\instantB\,e^{\decayB\,t}\left( \MBa_t -\MBa_T \right)
    - \MBb_t + \MBb_T \bigg\}\,\diff t\Bigg] \\
    &=  \E\Bigg[\int_0^T \nuBdir_t \bigg\{ 
    \decayB\,\instantB\,e^{\decayB\,t}\E\bigg[ \MBa_t -\MBa_T \,\Big|\,\mcFI_t\bigg]
    + \E\bigg[ \MBb_T - \MBb_t 
    \,\Big|\,\mcFI_t\bigg] \bigg\}\,\diff t\Bigg] = 0\,,
\end{align*}
for all $\nuB'\in\mcAB$ and $\nuBdir:=\nuB'-\nuB^*$. Thus, 
it follows that $\nuBstar$ is optimal. 
\end{proof}

The next theorem shows that if there is a solution to a coupled system of FBSDEs, then, the solution to the system satisfies Definition \ref{def: Nash equilibrium}, and thus, it is a Nash equilibrium of the stochastic game.

\begin{theorem}
If there is $\nuBstar\in\mcAB$ and $\nuIstar\in\mcAI$ that satisfy the system of FBSDEs
\begin{subequations}\label{eqns:system-FBSDEs-orig varaibles}
\begin{align}
%%% BROKER %%%
    &\diff \nuBstar_t  = - \frac{1}{2\,\tempB} \left(  \alpha_t
    - \left(2\,\runnB + \decayB\,\instantB \right)\QB[B,\nuBstar]_t    
    -\decayB\,Y_t 
    + \instantB\,\left(\nuIstar_t + \nuU_t\right)
    + \decayB^2\,\instantB\,Z_t 
    \right)\diff t \nonumber
    \\
    &\hspace{8cm} + \frac{1}{2\,\tempB} \diff \MBb_t - \frac{\decayB\,\instantB}{2\,\tempB}e^{\decayB\,t}\,\diff \MBa_t\,,\\
%%% BROKER TERMINAL CONDITION %%%    
& \nuBstar_T = - \frac{\varphi}{\tempB}\,\QB[B,\nuBstar]_T\,,\\
%%% INFORMED %%%        
& \diff \nuIstar_t = -\frac{1}{2\,\tempI}\left(\alpha_t - 2\,\runnI\,\QI[I,\nuIstar]_t +  \instantB\,\nuBstar_t - \decayB\, Y_t \right)\diff t + \diff \MI_t\,,\\
%%% INFORMED TERMINAL CONDITION %%%
& \nuIstar_T = - \frac{\termpI}{\tempI}\,\QI[I,\nuIstar]_T\,,\\
%%% Y-process %%%    
&\diff Y_t = ( \instantB\,\nuBstar_t-\decayB\,Y_t )\,\diff t  \,,\quad Y_0\in\mathbb{R}\,,\\
%%% Z-process %%%    
& \diff Z_t = \left( \decayB\,Z_t  - \QB[B,\nuBstar]_t\right)\diff t + e^{\decayB\,t}\diff \MBa_t  \,,\qquad Z_T = 0 \,,\\
%%% inventory-broker %%%    
&\diff \QB[B,\nuBstar]_t = \left(\nuBstar_t - \nuIstar_t -\nuU_t\right)\diff t\,,\qquad \QB[B,\nuBstar]_0 =q^B_0\,,\\
%%% inventory-informed %%%    
&\diff \QI[I,\nuIstar]_t = \nuIstar_t\,\diff t\,,\qquad \QI[I,\nuIstar]_0 =q^I_0\,,
\end{align}
\end{subequations}
where  $\varphi = \termpB - \tfrac12\instantB$, and  the processes $(\MBa_t)_{t\in\mfT}$, $(\MBb_t)_{t\in\mfT}$, and $(\MI_t)_{t\in\mfT}$ are $\mcF$-martingales. Then, $(\nuBstar,\nuIstar)\in\mcAB\times \mcAI$ is a Nash equilibrium of the trading strategies between the broker and the informed trader.
\end{theorem}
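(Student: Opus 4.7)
The plan is to recognise that the coupled system \eqref{eqns:system-FBSDEs-orig varaibles} decouples, upon freezing one of the two controls at its starred value, into precisely the single-agent FBSDEs \eqref{eq: FBSDE 1 Informed} and \eqref{eq: FBSDE 2 Broker} already characterised in Theorems \ref{thm: FBSDE Inf opt} and \ref{thm: FBSDE Broker opt} as best responses. The proof is then a two-line application of the sufficiency halves of those two theorems.

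First, I would fix $\nuIstar$ in \eqref{eqns:system-FBSDEs-orig varaibles} and read off the equations governing $\nuBstar$, together with the transient-impact process $Y$, the auxiliary process $Z$, the broker's inventory $\QB[B,\nuBstar]$, and the martingales $\MBa,\MBb$. After noting that $\varphi/\tempB = (2\,\termpB-\instantB)/(2\,\tempB)$, so that the terminal condition matches, this is exactly system \eqref{eq: FBSDE 2 Broker} with the informed trader's speed set equal to $\nuIstar \in \mcAI$. The sufficiency direction of Theorem \ref{thm: FBSDE Broker opt} then yields $\pcB(\nuB,\nuIstar) \le \pcB(\nuBstar,\nuIstar)$ for every $\nuB \in \mcAB$, which is condition (i) of Definition \ref{def: Nash equilibrium}.

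Symmetrically, fixing $\nuBstar$ (so that $Y$ becomes a fixed, adapted process) in \eqref{eqns:system-FBSDEs-orig varaibles} leaves the equations for $\nuIstar$ and $\QI[I,\nuIstar]$, which coincide verbatim with \eqref{eq: FBSDE 1 Informed} driven by $\nuB = \nuBstar \in \mcAB$. The sufficiency direction of Theorem \ref{thm: FBSDE Inf opt} then gives $\pcI(\nuBstar,\nuI) \le \pcI(\nuBstar,\nuIstar)$ for every $\nuI \in \mcAI$, which is condition (ii). The admissibility of $\nuBstar,\nuIstar$ and the martingale property of $\MBa,\MBb,\MI$ are included in the hypothesis, so no further regularity needs to be verified. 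There is no genuine obstacle here: the content of the theorem is simply that existence of a joint solution of the coupled FBSDE suffices, because the coupling has been arranged so that each sub-problem is, individually, a best-response FBSDE, and Definition \ref{def: Nash equilibrium} is the conjunction of those two best-response conditions.
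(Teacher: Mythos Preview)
Your proposal is correct and follows essentially the same approach as the paper: freeze one control at its starred value, recognise the remaining equations as the single-agent FBSDE of Theorem~\ref{thm: FBSDE Inf opt} or Theorem~\ref{thm: FBSDE Broker opt}, and invoke the sufficiency direction of each to verify the two conditions of Definition~\ref{def: Nash equilibrium}. Your observation that $\varphi/\tempB=(2\,\termpB-\instantB)/(2\,\tempB)$ makes the matching of terminal conditions explicit, which the paper leaves implicit.
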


\begin{proof}
Suppose there is $\nuIstar\in\mcAI$ and $\nuBstar\in\mcAB$ such that the system of FBSDEs  \eqref{eqns:system-FBSDEs-orig varaibles} holds;
then, by Theorem \ref{thm: FBSDE Inf opt} it follows that 
\begin{equation}
    \pcI(\nuIstar,\nuBstar) \geq \pcI(\nuIbis,\nuBstar)\,,\qquad \forall\;\nuIbis\in \mcAI\,.
\end{equation}
Similarly, by Theorem \ref{thm: FBSDE Broker opt}, it follows that
\begin{equation}
\pcB(\nuIstar,\nuBstar) \geq \pcB(\nuIstar,\nuBbis)\,,\qquad \forall\;  \nuBbis\in \mcAB \,.
\end{equation}
Thus, $(\nuIstar,\nuBstar)$ are a Nash equilibrium because they satisfy Definition \ref{def: Nash equilibrium}. 
\end{proof}

{\new{In general, existence and uniqueness of the FBSDE do not imply uniqueness of the Nash equilibrium. Here, given that for any $\nuI\in\mcAI$ we have that $\nuB\to \pcB(\nuB,\nuI)$ is strictly concave and for any $\nuB\in\mcAB$ we have that $\nuI\to \pcI(\nuB,\nuI)$ is strictly concave, and given Theorems \ref{thm: FBSDE Inf opt} and \ref{thm: FBSDE Broker opt}, existence and uniqueness of the FBSDE indeed implies uniqueness of the Nash equilibrium. If there is another Nash equilibrium $(\tilde{\nuB}, \tilde{\nuI})$, then, for $\tilde{\nuB}\in\mcAB$, Theorem \ref{thm: FBSDE Inf opt} implies that $\tilde{\nuI}$ satisfies the FBSDE of the informed trader, similarly,  for $\tilde{\nuI}\in\mcAI$, Theorem \ref{thm: FBSDE Broker opt} implies that $\tilde{\nuB}$ satisfies the FBSDE of the broker. Given that the two of them are satisfied simultaneously, then $(\tilde{\nuB}, \tilde{\nuI})$ is a solution to the coupled FBSDE. Given uniqueness of the coupled FBSDE we then have that the Nash equilibrium $(\tilde{\nuB}, \tilde{\nuI})$ coincides with $(\nuBstar, \nuIstar)$.  }}

\subsection{Existence and uniqueness}
In this section we are concerned with the existence and uniqueness of the system of FBSDEs above  that characterises the Nash equilirbium.
Introduce  the (vector valued) forward process $(\fbsdeX_t)_\tT$, backward process $(\fbsdeY_t)_\tT$, and martingale process $(\fbsdeM_t)_\tT$, where
\begin{align}
\fbsdeX_t &= 
\begin{pmatrix}
    \QB[B,\nuBstar]_t\\
    \QI[I,\nuIstar]_t\\
    Y_t
\end{pmatrix}\,,\qquad 
    \fbsdeY_t = 
\begin{pmatrix}
    \nuBstar_t\\
    \nuIstar_t\\
    Z_t
\end{pmatrix}\,,\qquad \text{and} \qquad
\fbsdeM_t = 
\begin{pmatrix}
    \MBa_t\\
    \MBb_t\\
    \MI_t
\end{pmatrix}
\,.
\end{align}
The FBSDE in \eqref{eqns:system-FBSDEs-orig varaibles} can be written  as follows
\begin{subequations}
    \begin{align}
\diff \fbsdeX_t &= \left( \fbsdeA\,\fbsdeX_t + \fbsdeB\,\fbsdeY_t  +\fbsdeb_t\right)\,\diff t\,,
& \fbsdeX_0 & = \left( q^B_0, q^I_0, Y_0 \right)^{\intercal},
\label{eqn: sde-for-X}
\\
\diff \fbsdeY_t &= \left( \fbsdeAhat\,\fbsdeX_t + \fbsdeBhat\,\fbsdeY_t + \fbsdebhat_t \right)\,\diff t + \fbsdesigma_t\,\diff \fbsdeM_t \,,\label{bsde of Y}
& \fbsdeY_T & = \fbsdeG\,\fbsdeX_T \,,
\end{align}%
\label{eqn:fbsde-system}%
\end{subequations}%
where $\fbsdeA,\fbsdeB,\fbsdeAhat, \fbsdeBhat, \fbsdeG$ are deterministic matrices in $\R^{3\times 3}$, the stochastic processes $\fbsdeb_t,\fbsdebhat_t:\mfT\times \Omega\to \R^3$ are  valued in $\R^3$, and $\fbsdesigma_t:
\mfT\to\R^{3\times 3}$ is a deterministic function, all of which we obtain from  \eqref{eqns:system-FBSDEs-orig varaibles}.
More precisely,
\allowdisplaybreaks
{
\begin{gather}
\fbsdeA = \begin{pmatrix}
       \, 0 & 0 & 0 \vspace{0.1cm}\\
       0 & 0 & 0 \vspace{0.1cm}\\
       0 & 0 & -\decayB
\end{pmatrix},
\qquad
\fbsdeB = \begin{pmatrix}
       1 & -1  &  0\vspace{0.1cm}\\
       0 & 1 & 0 \vspace{0.1cm}\\
       \instantB & 0 & 0
\end{pmatrix},
\qquad
\fbsdeb_t = \begin{pmatrix}
     -\nuU_t\\
     0\\
     0
\end{pmatrix},
\vspace{0.5cm}
\\
\fbsdeAhat = \begin{pmatrix}
\frac{2\,\runnB+\decayB\,\instantB}{2\tempB} & 0 & \frac{\decayB}{2\,\tempB} \vspace{0.1cm}
\\
0 &  \frac{\runnI}{\tempI} & \frac{\decayB}{2\,\tempI} \vspace{0.1cm}\\
-1 & 0  & 0
\end{pmatrix},
\qquad 
\fbsdeBhat = \begin{pmatrix}
0 & -\frac{\instantB}{2\,\tempB} & -\frac{\decayB^2 \,\instantB  }{2\, \tempB} \vspace{0.1cm}\\
-\frac{\instantB}{2\,\tempI} & 0 & 0 \vspace{0.1cm}\\
0 & 0 & \decayB\,
\end{pmatrix},
\\
\fbsdeG = \begin{pmatrix}
- \frac{\varphi}{\tempB} & 0 & 0  \\
0 &  - \frac{\termpI}{\tempI} & 0 \\
0 & 0 & 0      
\end{pmatrix},
\qquad 
\fbsdebhat_t = \begin{pmatrix}
- \frac{\alpha_t + \instantB\,\nuU_t}{2\,\tempB} \vspace{0.1cm}\\
- \frac{\alpha_t}{2\,\tempI} \vspace{0.1cm}\\
0
\end{pmatrix},
\qquad 
\fbsdesigma_t = \begin{pmatrix}
- \frac{\decayB\,\instantB}{2\,\tempB}e^{\decayB\,t} & \frac{1}{2\,\tempB} & 0 \vspace{0.1cm}\\
0 & 0 & 1 \vspace{0.1cm}\\
e^{\decayB\,t} & 0 & 0 
\end{pmatrix}.
\end{gather}
}

{\new{There are a number of theorems in the literature that show existence and uniqueness of FBSDEs.
If $T$ is small enough, existence and uniqueness was first studied in \cite{antonelli1993backward}; for an overview see  Chapter 2 in \cite{carmona2016lectures} and the references therein. Beyond the small time horizon,
in the Brownian case,  }}
\cite{yong1999linear} shows that existence and uniqueness of a solution to the FBSDE system \eqref{eqn:fbsde-system} is equivalent to the existence and uniqueness of a matrix Riccati differential equation; this result follows the  `four-step scheme' of \cite{ma1994solving}.\footnote{ See \cite{ma1999forward,carmona2016lectures} for manuscripts covering a wide range of existence and uniqueness results for FBSDEs. }

{\new{As noted in page 58 in \cite{carmona2016lectures} regarding the affine FBSDE (2.50):  \emph{``except for possibly in the case $C_t=D_t=0$ and $\sigma_t\equiv\sigma>0$,  and despite the innocent-looking form of the coefficients of FBSDE (2.50),  none of the existence theorems which we \textit{slaved} to prove in the previous sections apply to this simple particular case''}. If we were to restrict $\MI,\MBa,\MBb$ to be Brownian motions, our FBSDE would be as in (2.50) in \cite{carmona2016lectures} but for $C_t=D_t=\sigma_t=0$ (not $\sigma_t\equiv\sigma>0$).  Our FBSDE goes beyond the Brownian context. Given that we did not find a short-time-horizon existence and uniqueness result that was readily applicable in our case, we show, for the sake of completeness, the precise constraint on $T$ that guarantees existence and uniqueness in the present context. }}

{\new{Below, we prove existence and uniqueness of the above FBSDE under three different set of restrictions on model parameters.  
Firstly, in the general case, we find an explicit bound on $T$ that guarantees existence and uniqueness.
Secondly, when the values of the parameters $\decayB$ and $\runnB$ are zero, existence and uniqueness of the solution follows without any further conditions on the remaining model parameters (in particular, no restrictions on $T$). 
Thirdly, if only $\decayB$ is zero, existence and uniqueness follows under four additional (mild) restrictions on model parameters (namely, $\instantB \leq 2\,\tempI,\,2\,\tempB,\,4\,\runnI,\,4\,\runnB$). Out of the three proofs that we present (small $T$, $\decayB=\runnB=0$, and $\decayB=0$ with $\runnB>0$) the first proof is new and follows the ideas in \cite{antonelli1993backward}. The second and third are standard because we show existence and uniqueness at the level of the matrix Riccati differential equation.
}}

{\new{We start by finding the explicit bound on $T$ so that there is a unique solution to the above FBSDE.}} Let 
\begin{equation}
  \S^2 := \Bigg\{ (\gamma_t)_{t\in\mfT} \,\,|\,\, \gamma\text{ is } \mcFB-\text{adapted, and } \E\left[\sup_{t\in\mfT}\textstyle|\gamma_t|^2\right]\,<\,+\infty \Bigg\}  \,,
\end{equation}
and define 
$\spaceSSS =\S^2\times\S^2\times\S^2 $ and let $\spaceK = \spaceSSS\times \spaceSSS$. 
We endow the 
vector space $\spaceK$ with the following  norm: for $(X,Y)\in\spaceK$ 
\begin{equation}
\norm{(X,Y)}_{\spaceK}^2 :=  \mathbb{E}\left[\sup_{t\in\mfT}  |{X_t}|^2\right]  +  \mathbb{E}\left[\sup_{t\in\mfT}|{Y_t}|^2 \right]\,,
\end{equation}
where  $|\cdot|$  is the Euclidean norm in $\R^3$. 
When $|{\cdot}|$ is applied to a matrix $M\in\R^{3\times 3}$, then
\begin{equation}
    |{M}| := \sup_{x\in\R^3,\, 0<|{x}|\leq 1}\frac{|{M\,x}|}{|{x}|}\,.
\end{equation}

The theorem below shows the bound that $T$ needs to satisfy so that existence and uniqueness follows from a fixed-point argument.

\begin{theorem}\label{thm: existence uniqueness}
If the time horizon $T$ and $|\fbsdeG|$ are such that  
\begin{equation}
  \max\left\{ 12\,|\fbsdeG|^2 + T^2(2\,|\fbsdeA|^2+30\,|\fbsdeAhat|^2)\,,\,  T^2\left(2\,|\fbsdeB|^2 + 30\,|\fbsdeBhat|^2\right) \right\}<1\,,  
\end{equation}
then, there is a unique solution to the FBSDE system in \eqref{eqns:system-FBSDEs-orig varaibles}. When $|\fbsdeG| = 0$, the inequality reduces to \begin{equation}\label{eq:bound-on-T}
    T < 1 \bigg/ \sqrt{ \left( \max\left\{  2\,|\fbsdeA|^2 + 30\,|\fbsdeAhat|^2,\,  2\,|\fbsdeB|^2 + 30\,|\fbsdeBhat|^2 \right\} \right)
    }\,.
\end{equation}
\end{theorem}

\begin{proof}
Consider the mapping $\Gamma:\spaceK\to \spaceK$ that maps $(\fbsdeX,\fbsdeY)\in\spaceK$ to
\begin{align*}
    \Gamma(\fbsdeX,\fbsdeY)_t = \left(\fbsdeX_0+\int_0^t \left( \fbsdeA\,\fbsdeX_s + \fbsdeB\,\fbsdeY_s  + \fbsdeb_s\right)\diff s,\,
    \mathbb{E}\left[\fbsdeG\,\fbsdeX_T - \int_t^T\left( \fbsdeAhat\,\fbsdeX_s + \fbsdeBhat\,\fbsdeY_s + \fbsdebhat_s \right)\diff s \,\Big| \,\mcF_t \right] \right)\,.
\end{align*}
First, we show that indeed for $(\fbsdeX,\fbsdeY)\in\spaceK$, we have that $\Gamma(\fbsdeX,\fbsdeY)\in\spaceK$. Observe that 
\begin{align}
\norm{\Gamma(\fbsdeX,\fbsdeY)}_{\spaceK}^2 &= \underbrace{\mathbb{E}\left[ \sup_{t\in\mfT} \left|{\fbsdeX_0+\int_0^t \left( \fbsdeA\,\fbsdeX_s + \fbsdeB\,\fbsdeY_s  +\fbsdeb_s\right)\,\diff s}\right|^2 \right] }_{=:\mfA}\\
&\quad +  \underbrace{\mathbb{E}\left[ \sup_{t\in\mfT}\Bigg|{\mathbb{E}\left[\fbsdeG\fbsdeX_T - \int_t^T\left( \fbsdeAhat\,\fbsdeX_s + \fbsdeBhat\,\fbsdeY_s + \fbsdebhat_s \right)\,\diff s \,\Big| \,\mcF_t \right]}\Bigg|^2 \diff t \right]}_{=:\mfB}\,.
\end{align}
For the first term, using the triangle inequality and Cauchy-Schwarz inequality we find the upper bound 
\begin{align}
\mfA &\leq 2\,\mathbb{E}\left[ \sup_{t\in\mfT} |\fbsdeX_t|^2\right] + 2\,T\,\mathbb{E}\left[ \sup_{t\in\mfT}  \int_0^t \left| \fbsdeA\,\fbsdeX_s + \fbsdeB\,\fbsdeY_s  +\fbsdeb_s\right|^2\,\diff s\right]\\
&\leq 2\,\mathbb{E}\left[ \sup_{t\in\mfT} |\fbsdeX_t|^2\right] + 2\,T\,\mathbb{E}\left[ \int_0^T \left| \fbsdeA\,\fbsdeX_s + \fbsdeB\,\fbsdeY_s  +\fbsdeb_s\right|^2\,\diff s\right]\\
& \leq 2\,\mathbb{E}\left[ \sup_{t\in\mfT} |\fbsdeX_t|^2\right] + 6\,T^2\,\mathbb{E}\left[ \sup_{t\in\mfT}   |\fbsdeA|^2\,|\fbsdeX_t|^2 + |\fbsdeB|^2\,|\fbsdeY_t|^2  + |\fbsdeb_t|^2\right]\,,
\end{align}
which is finite because $(\fbsdeX,\fbsdeY)\in\spaceK$.
For the second term, using Cauchy-Schwarz inequality we find the following upper bound
\begin{align}
\mfB &\leq 3\,\mathbb{E}\left[ \sup_{t\in\mfT} \left|\mathbb{E}\left[\fbsdeG\,\fbsdeX_T\,\Big| \,\mcF_t\right]\right|^2\right] + 3\, \mathbb{E}\left[ \sup_{t\in\mfT} \left|\mathbb{E}\left[\int_0^T\left( \fbsdeAhat\,\fbsdeX_s + \fbsdeBhat\,\fbsdeY_s + \fbsdebhat_s \right)\,\diff s\,\Big| \,\mcF_t\right]\right|^2\right] \\
&\qquad + 3\, \mathbb{E}\left[ \sup_{t\in\mfT} \left|\int_0^t\left( \fbsdeAhat\,\fbsdeX_s + \fbsdeBhat\,\fbsdeY_s + \fbsdebhat_s \right)\,\diff s\right|^2\right] \,.
\end{align}
The third term above can be bound similarly to the bound for $\mfA$. For the first two terms above, we use Doob's martingale inequality to find that 
\begin{align}
&\mathbb{E}\left[ \sup_{t\in\mfT} \left|\mathbb{E}\left[\fbsdeG\,\fbsdeX_T\,\Big| \,\mcF_t\right]\right|^2\right] + \mathbb{E}\left[ \sup_{t\in\mfT} \left|\mathbb{E}\left[\int_0^T\left( \fbsdeAhat\,\fbsdeX_s + \fbsdeBhat\,\fbsdeY_s + \fbsdebhat_s \right)\,\diff s\,\Big| \,\mcF_t\right]\right|^2\right]\\
&\qquad \leq   4\, \mathbb{E}\left[ \left|\fbsdeG\,\fbsdeX_T\right|^2\right] + 4\, \mathbb{E}\left[ \left|\int_0^T\left( \fbsdeAhat\,\fbsdeX_s + \fbsdeBhat\,\fbsdeY_s + \fbsdebhat_s \right)\,\diff s\right|^2\right]
\\
&\qquad \leq   4\,|\fbsdeG|^2\, \mathbb{E}\left[ \sup_{t\in\mfT}|\fbsdeX_t|^2\right] 
+ 4\, \mathbb{E}\left[ \left|\int_0^T\left( \fbsdeAhat\,\fbsdeX_s + \fbsdeBhat\,\fbsdeY_s + \fbsdebhat_s \right)\,\diff s\right|^2\right]\,,
\end{align}
from which we conclude that $\mfB$ is finite. 
Thus, $\norm{\Gamma(\fbsdeX,\fbsdeY)}_{\spaceK}^2 <+\infty$ and it follows that  $\Gamma(\fbsdeX,\fbsdeY)\in\spaceK$ because $\Gamma(\fbsdeX,\fbsdeY)$ is adapted by construction.

Next, we show that for $T$ small enough $\Gamma$ is a contraction mapping.
Let $(\fbsdeX,\fbsdeY)\in\spaceK$ and $(\tilde\fbsdeX,\tilde\fbsdeY)\in\spaceK$, it follows that
\begin{align}
    &\norm{\Gamma(\fbsdeX,\fbsdeY) - \Gamma(\tilde\fbsdeX,\tilde\fbsdeY)}^2_{\spaceK} \\
    &\quad= \mathbb{E}\left[ \sup_{t\in\mfT} \left|{\int_0^t \fbsdeA\,\left(\fbsdeX_s - \tilde\fbsdeX_s\right) + \fbsdeB\,\left(\fbsdeY_s - \tilde\fbsdeY_s \right)\diff s }\right|^2\right]\\
    &\quad\qquad + \mathbb{E}\left[\sup_{t\in\mfT}  \left| \mathbb{E}\left[ \fbsdeG\,\left(\fbsdeX_T -\tilde\fbsdeX_T\right) - {\int_t^T  \fbsdeAhat\,\left(\fbsdeX_s - \tilde\fbsdeX_s\right) + \fbsdeBhat\,\left(\fbsdeY_s - \tilde\fbsdeY_s \right) \diff s }  \,\Big| \,\mcF_t \right]\right|^2  \right]\,.
\end{align}
Using  the triangle inequality, the Cauchy-Schwartz inequality, and Doob's martingale inequality, we find that 
\begin{subequations}
    \begin{align}
    &\norm{\Gamma(\fbsdeX,\fbsdeY) - \Gamma(\tilde\fbsdeX,\tilde\fbsdeY)}^2_{\spaceK} \\
&\qquad\leq  2\,T^2\,|\fbsdeA|^2\,\mathbb{E}\left[ \sup_{t\in\mfT} \left|\fbsdeX_t - \tilde\fbsdeX_t\right|^2 \right] 
    + 2\,T^2\,|\fbsdeB|^2\,\mathbb{E}\left[ \sup_{t\in\mfT} \left|\fbsdeY_t - \tilde\fbsdeY_t\right|^2 \right] 
    \\
    &\qquad\qquad + 3\,\mathbb{E}\left[\sup_{t\in\mfT}  \left| \mathbb{E}\left[ \fbsdeG\,\left(\fbsdeX_T -\tilde\fbsdeX_T\right)   \,\Big| \,\mcF_t \right]\right|^2  \right]
    \\
    &\qquad\qquad 
    + 3\,\mathbb{E}\left[\sup_{t\in\mfT}  \left| \mathbb{E}\left[  {\int_0^T  \fbsdeAhat\,\left(\fbsdeX_s - \tilde\fbsdeX_s\right) + \fbsdeBhat\,\left(\fbsdeY_s - \tilde\fbsdeY_s \right) \diff s }  \,\Big| \,\mcF_t \right]\right|^2  \right]\\
    &\qquad\qquad  + 3\,\mathbb{E}\left[\sup_{t\in\mfT}  \left|  \int_0^t  \fbsdeAhat\,\left(\fbsdeX_s - \tilde\fbsdeX_s\right) + \fbsdeBhat\,\left(\fbsdeY_s - \tilde\fbsdeY_s \right) \diff s  \right|^2  \right]
    \\
&\qquad\leq  2\,T^2\,|\fbsdeA|^2\,\mathbb{E}\left[ \sup_{t\in\mfT} \left|\fbsdeX_t - \tilde\fbsdeX_t\right|^2 \right] 
    + 2\,T^2\,|\fbsdeB|^2\,\mathbb{E}\left[ \sup_{t\in\mfT} \left|\fbsdeY_t - \tilde\fbsdeY_t\right|^2 \right] 
    \\
    %Doob 1
    &\qquad\qquad + 12\,| \fbsdeG|^2\,\mathbb{E}\left[\left| \fbsdeX_T -\tilde\fbsdeX_T   \right|^2  \right]
    \\
    %Doob 2
    &\qquad\qquad 
    + 12\,\mathbb{E}\left[ \left| \int_0^T  \fbsdeAhat\,\left(\fbsdeX_s - \tilde\fbsdeX_s\right) + \fbsdeBhat\,\left(\fbsdeY_s - \tilde\fbsdeY_s \right) \diff s  \right|^2  \right]
    \\
    &\qquad\qquad +  6\,T^2\,|\hat\fbsdeA|^2\,\mathbb{E}\left[ \sup_{t\in\mfT} \left|\fbsdeX_t - \tilde\fbsdeX_t\right|^2 \right] 
    + 6\,T^2\,|\hat\fbsdeB|^2\,\mathbb{E}\left[ \sup_{t\in\mfT} \left|\fbsdeY_t - \tilde\fbsdeY_t\right|^2 \right] \,.
\\
&\qquad\leq  2\,T^2\,|\fbsdeA|^2\,\mathbb{E}\left[ \sup_{t\in\mfT} \left|\fbsdeX_t - \tilde\fbsdeX_t\right|^2 \right] 
    + 2\,T^2\,|\fbsdeB|^2\,\mathbb{E}\left[ \sup_{t\in\mfT} \left|\fbsdeY_t - \tilde\fbsdeY_t\right|^2 \right] \\
    %Doob 1
    &\qquad\qquad + 12\,| \fbsdeG|^2\,\mathbb{E}\left[\sup_{t\in\mfT}\left| \fbsdeX_t -\tilde\fbsdeX_t   \right|^2  \right]\\
    %Doob 2
    &\qquad\qquad 
    + 24\,T^2\,|\hat\fbsdeA|^2\,\mathbb{E}\left[ \sup_{t\in\mfT} \left|\fbsdeX_t - \tilde\fbsdeX_t\right|^2 \right] 
    + 24\,T^2\,|\hat\fbsdeB|^2\,\mathbb{E}\left[ \sup_{t\in\mfT} \left|\fbsdeY_t - \tilde\fbsdeY_t\right|^2 \right]
    \\
    &\qquad\qquad +  6\,T^2\,|\hat\fbsdeA|^2\,\mathbb{E}\left[ \sup_{t\in\mfT} \left|\fbsdeX_t - \tilde\fbsdeX_t\right|^2 \right] 
    + 6\,T^2\,|\hat\fbsdeB|^2\,\mathbb{E}\left[ \sup_{t\in\mfT} \left|\fbsdeY_t - \tilde\fbsdeY_t\right|^2 \right] 
    \\
    &\qquad = (12\,|\fbsdeG|^2 + T^2\,(2\,|\fbsdeA|^2 + 30\,|\fbsdeAhat|^2))\,
    \mathbb{E}\left[ \sup_{t\in\mfT} \left|\fbsdeX_t - \tilde\fbsdeX_t\right|^2 \right] 
    \\
    & \qquad\quad
    +
    T^2\,(2|\fbsdeB|^2
    +30|\fbsdeBhat|^2)\,
    \mathbb{E}\left[ \sup_{t\in\mfT} \left|\fbsdeY_t - \tilde\fbsdeY_t\right|^2 \right]
\end{align}
\end{subequations}

Given  the conditions on $|\fbsdeG|$ and $T$ it follows that  $\Gamma$ is a contraction mapping and, by the Banach fixed point theorem in the Banach space $(\spaceK,||\cdot||_{\spaceK})$, we have that there exists a unique fixed point. 
It is then easy to see that the fixed point is the solution to the FBSDE system in \eqref{eqn:fbsde-system}.
\end{proof}

The norms $|{\fbsdeA}|, |{\fbsdeB}|,|{\fbsdeAhat}|, |{\fbsdeBhat}|$ are easy to compute for specific  examples;\footnote{
The expressions for $|{\fbsdeA}|, |{\fbsdeB}|$, and $ |{\fbsdeBhat}|$ are easy to obtain in general, but  $|{\fbsdeAhat}|$ is more involved. For example, for $|{\fbsdeA}|, |{\fbsdeB}|$, and $ |{\fbsdeBhat}|$, the general expressions are  $|\fbsdeA| =  \decayB$, $|\fbsdeB| = \sqrt{3 + \instantB^2 + \sqrt{5 - 2\,\instantB^2 + \instantB^4}}/\sqrt{2}$, and 
\begin{align}
    |\fbsdeBhat| &= \max\Bigg\{\frac{\instantB}{2\,\tempI}, \sqrt{
4\,\tempB^2\, \decayB^2 + 
 \instantB^2 (1 + \decayB^4) - \sqrt{-16 \,\tempB^2\,\instantB^2\,\decayB^2 + (4\,\tempB^2\,\decayB^2 + 
    \instantB^2\,(1 + \decayB^4))^2}}/(2 \sqrt{2}\, \tempB), \\
    &\qquad\qquad \sqrt{
4\,\tempB^2\,\decayB^2 + 
 \instantB^2\, (1 + \decayB^4) + \sqrt{-16 \,\tempB^2\, \instantB^2\,\decayB^2 + (4\,\tempB^2\,\decayB^2 + 
    \instantB^2\,(1 + \decayB^4))^2}}/(2\, \sqrt{2}\, \tempB)\Bigg\}\,.
\end{align}
}
for example, in the numerical experiments below, $|\fbsdeA|=0$, $|\fbsdeB|=1.618$, $|\fbsdeAhat|= 1$, and $|\fbsdeBhat|= 0.5$.

\section{Closed-form solution to the Nash equilibrium}\label{sec: closed form Nash vector not}
Let $\ell:\mfT\times \Omega\to \R^3$ and $\riccatiP:\mfT\to\R^{3\times 3}$  be given by
\begin{align}
{\ell}_t = 
\begin{pmatrix}
    \g_t\\
    \h_t\\
    \f_t
\end{pmatrix},
\qquad
\riccatiP_t = 
\begin{pmatrix}
    \gB_t & \gI_t & \gY_t\\
    \hB_t & \hI_t & \hY_t\\
    \fB_t & \fI_t & \fY_t
\end{pmatrix},
\end{align}
and consider the ansatz 
\begin{equation}
\fbsdeY_t = \ell_t + \riccatiP_t\,\fbsdeX_t\,.
\end{equation}
It follows that 
\begin{align}
\diff \fbsdeY_t 
&= \diff \ell_t +  \riccatiP'_t\,\fbsdeX_t\,\diff t + \riccatiP_t\,\diff \fbsdeX_t\\
&= \diff \ell_t +  \riccatiP'_t\,\fbsdeX_t\,\diff t + \riccatiP_t\,\left( \fbsdeA\,\fbsdeX_t + \fbsdeB\,\fbsdeY_t  +\fbsdeb_t\right)\,\diff t\\
&=\diff \ell_t +  \riccatiP'_t\,\fbsdeX_t\,\diff t + \riccatiP_t\,\left( \fbsdeA\,\fbsdeX_t + \fbsdeB\,\left(\ell_t + \riccatiP_t\,\fbsdeX_t\right)  +\fbsdeb_t\right)\,\diff t\,,
\end{align}
which together with \eqref{bsde of Y} implies that  
\begin{equation}
\diff \ell_t +  \riccatiP'_t\,\fbsdeX_t\,\diff t + \riccatiP_t\,\left( \fbsdeA\,\fbsdeX_t + \fbsdeB\,\left(\ell_t + \riccatiP_t\,\fbsdeX_t\right)  +\fbsdeb_t\right)\,\diff t = \left( \fbsdeAhat\,\fbsdeX_t + \fbsdeBhat\,\left(\ell_t + \riccatiP_t\,\fbsdeX_t\right) + \fbsdebhat_t \right)\,\diff t + \fbsdesigma_t\,\diff \fbsdeM_t\,.
\end{equation}
Rearranging the above expression and collecting terms with $\fbsdeX$ we obtain that 
\begin{equation}
\begin{split}
    0&=\left\{\diff \ell_t +  \left[\riccatiP_t\left(\fbsdeB\,\ell_t + \fbsdeb_t\right) - \fbsdeBhat\,\ell_t - \fbsdebhat_t\right]\diff t  - \fbsdesigma_t\,\diff \fbsdeM_t\right\}
    \\
&\qquad + \left\{ \riccatiP'_t  + \riccatiP_t\,\fbsdeA +  \riccatiP_t\,\fbsdeB\,\riccatiP_t - \fbsdeAhat - \fbsdeBhat\,\riccatiP_t \right\}\fbsdeX_t\,\diff t\,.
\end{split}
\end{equation}
Thus, showing existence and uniqueness of the solution to the FBSDE in \eqref{eqn:fbsde-system} boils down to showing existence and uniqueness of the   non-Hermitian matrix Riccati differential equation
\begin{subequations}
\begin{align}
0 &=\riccatiP'_t  + \riccatiP_t\,\fbsdeA +  \riccatiP_t\,\fbsdeB\,\riccatiP_t - \fbsdeAhat - \fbsdeBhat\,\riccatiP_t \,,\\
\riccatiP_T &= \fbsdeG\,,
\end{align}
 \label{eq:riccati new}
\end{subequations}    
and the linear BSDE 
\begin{subequations}
\begin{align}
0 &= \diff \ell_t +  \left[\riccatiP_t\left(\fbsdeB\,\ell_t + \fbsdeb_t\right) - \fbsdeBhat\,\ell_t - \fbsdebhat_t\right]\diff t  - \fbsdesigma_t\,\diff \fbsdeM_t\,,\\
\ell_T &= 0\,.
\end{align}\label{eq:bsde new}
\end{subequations}

Subsection \ref{sec: decay=0 and runnB=0} shows existence and uniqueness of a solution to \eqref{eq:riccati new} when $\decayB=\runnB=0$ under no further restrictions of model parameters, and Subsection \ref{sec: decay=0} shows existence and uniqueness when $\decayB=0$ under four additional mild restrictions on model parameters. Lastly, as mentioned before,  when $\decayB>0$ existence and uniqueness of the FBSDE follow from the assumption that $T$ is small enough; see Theorem \ref{thm: existence uniqueness} for the bound on the time horizon $T$.

\subsection{No resilience and no running penalty for the broker}\label{sec: decay=0 and runnB=0}
Next,  when the values of model parameters $\decayB$ and $\runnB$ are zero, we show existence and uniqueness of a solution to \eqref{eq:riccati new} with no additional constraints on the values of the remaining model parameters. Let
\begin{equation}
    {C}=\begin{pmatrix}
        0 & 0 & 0 \\
        0 & 0 & 0 \\
        0 & 0 & 1
    \end{pmatrix}\,,\qquad    {D}=\begin{pmatrix}
        -1 & 0 & 0 \\
        0 & -1 & 0 \\
        0 & 0 & 0
    \end{pmatrix}\,,
\end{equation}
and use Theorem 2.3 in \cite{freiling2000non} with the matrices ${C}$ and ${D}$ above. 
More precisely, using the notation of \cite{freiling2000non}, we have that $B_{11} = \fbsdeA$, $B_{12} = \fbsdeB$, $B_{21}= \fbsdeAhat$, and $B_{22} = \fbsdeBhat$. Then, it follows that for our choice of $C$ and $D$, 
\begin{equation}
    {C}+{D}\,\fbsdeG + \fbsdeG^\intercal\,{D}^\intercal = \begin{pmatrix}
       \frac{2\,\varphi}{\tempB} & 0 & 0\\
       0 &  \frac{2\,\termpI}{\tempI} & 0\\
       0 & 0 & 1
    \end{pmatrix}  >  0\,,
\end{equation}
which is positive definite because  $\varphi,\termpI > 0$. Next, matrix ${L}$ in Theorem 2.3 of \cite{freiling2000non} is given by 
\begin{equation}
    {L}=   
\begin{pmatrix}
{C}\,B_{11} + {D}\,B_{21} & {C}\,B_{12} +B_{11}^\intercal {D} + {D}\,B_{22}\\
  0 & B_{12}^\intercal {D}
\end{pmatrix} 
\,,
\end{equation}
and a short calculation (using Sylvester's criterion) shows that ${L}+{L}^\intercal\leq 0$. Similar to part II of the proof of Theorem 3.5 in \cite{casgrain2020mean}, the solution is bounded because the solution exists and is continuous in $[0,T]$; therefore,  from Theorem 3.1 in \cite{freiling2002survey},  the unique solution takes the form
\begin{align}\label{eq: Riccati solution}
    \riccatiP_t =  \mathcal{T}_t\,{R}^{-1}_t\,,
\end{align}
where the pair ${R}_t,\mathcal{T}_t$ is the unique solution to the linear system of differential equations 
\begin{align}
    \frac{\diff }{\diff t}
    \begin{pmatrix}
        {R}_t\\
        \mathcal{T}_t
    \end{pmatrix}
    = 
    \begin{pmatrix}
        B_{11} & B_{12}\\
        B_{21} & B_{22}
    \end{pmatrix}
    \,
    \begin{pmatrix}
        {R}_t\\
        \mathcal{T}_t
    \end{pmatrix}\,,\qquad \begin{pmatrix}
        {R}_T\\
        \mathcal{T}_T
    \end{pmatrix} = \begin{pmatrix}
        I\\
        \fbsdeG
    \end{pmatrix}\,.
\end{align}

\subsection{No resilience}\label{sec: decay=0}
In this subsection we study the case $\decayB=0$ in more detail. In particular, when $\decayB=0$, we establish a one-to-one correspondence between four of the functions in  the Nash equilibrium we find, and four functions in the mean-field Nash equilibrium in \cite{bergault2024mean}. When
 $\decayB=0$,  the system of equations for $\f,\g,\h,\fB,\gB,\hB,\fI,\gI,\hI,\fY,\gY,\hY$ simplifies, and we have that $\gY=\hY=\fY=0$. 
As a consequence, the equations for $\gB,\hB,\gI,\hI$ decouple from the rest. 
After a renaming of terms, we obtain the system of ODEs as that characterised in equation (3.10) in \cite{bergault2024mean}. 
That is, after renaming equations and parameters, the solution to $\gB,\hB,\gI,\hI$ above, is  equivalent to that of $g^b,g^c,h^b,h^c$ in their paper. Existence and uniqueness of the solution follow from their results under the mild condition $\instantB \leq 2\,\tempI,\,2\,\tempB,\,4\,\runnI,\,4\,\runnB$.  The remainder of their results are different to the ones in this paper, this is because in their model each player observes two signals, a private signal and a common signal, thus, in their model,  the optimal strategy of any given informed trader is different from the one we derive here by virtue of the formulation.

\subsection{The solution to the BSDE}
Next, we use the solution to the non-Hermitian matrix Riccati differential equation to find a solution to the BSDE \eqref{eq:bsde new} that $\ell$ must satisfy. Write \eqref{eq:bsde new} as
\begin{align}
\diff {\ell}_t = \left({a}_t +  \Lambda_t\,{\ell}_t\right)\diff t + {\sigma}_t\,\diff  \fbsdeM_t\,,
\end{align}
where $a_t = \fbsdebhat_t - \riccatiP_t\,\fbsdeb_t$ and $\Lambda_t = \fbsdeBhat - \riccatiP_t\,\fbsdeB$, and let ${\zeta}_t$ be the unique solution to
\begin{align}
\diff {\zeta}_t = -{\zeta}_t\,{\Lambda}_t\,\diff t\,,
\end{align}
with initial condition ${\zeta}_0=I\in\R^{2\times 2}$.  Then, 
\begin{align}
\diff \left({\zeta}_t\,{\ell}_t \right) & = {\ell}_t \,\diff {\zeta}_t + {\zeta}_t\,\diff {\ell}_t\\
& = - {\ell}_t\, {\zeta}_t\,{\Lambda}_t\,\diff t + {\zeta}_t\,\left({a}_t +  \Lambda_t\,{\ell}_t\right)\diff t - {\zeta}_t\, {\sigma}_t\,\diff  \fbsdeM_t\\
& ={\zeta}_t\,{a}_t\,\diff t - {\zeta}_t\,{\sigma}_t\, \diff  \fbsdeM_t\,,
\end{align}
which implies that
\begin{align}
 - {\zeta}_t\, {\ell}_t
& =\int_t^T {\zeta}_u\, {a}_u\,\diff u + \int_t^T {\zeta}_u\, {\sigma}_u\, \diff  \fbsdeM_u\,,
\end{align}
because ${\zeta}_T\,{\ell}_T = 0$. Thus, by taking conditional expectations, the solution to ${\ell}_t$ is
\begin{align}\label{eq: ell cond expectation}
{\ell}_t
& = - \E\left[\int_t^T {\zeta}^{-1}_t\, {\zeta}_u\, {a}_u\,\diff u\,\big|\,\mcF_t\right] \,.
\end{align}
We conclude that the Nash equilibrium $\nuBstar,\nuIstar$, which are the first two components of $\fbsdeY$, is given by 
\begin{align}\label{eq: closed form for strategies Nash compact}
   \fbsdeY_t = {\ell}_t + \riccatiP_t\,\fbsdeX_t\,,
\end{align}
where $\fbsdeX$ is the unique solution to the forward equation \eqref{eqn: sde-for-X}.

\section{Numerical results}\label{sec: numerics}
In this section, we study the optimal trading strategies and the Nash equilibrium between the broker and the informed trader. We discretise the trading window $[0,T]$, with $T=1$,   into  10,000 steps and perform one million simulations. The terminal penalty  parameters are $\termpI = \termpB = 1$, the instantaneous price  impact parameters are $\tempB = 1.2\times 10^{-3}$ and $\tempI = \tempU = 1\times 10^{-3}$, the running penalties are $\runnI = \runnB = 0$, and the transient price impact parameters are $\instantB = 10^{-3}$, and $\decayB = 0$. The midprice starts at $S_0= 100$ and  $ M^S_t=\sigma^S\, W^S_t$, where $\sigma^S = 1$ and $W^S_t$ is a Brownian motion. The signal $\alpha_t$ follows the OU process $\diff \alpha_t = -\kappa^\alpha \,\alpha_t\,\diff t + \sigma^\alpha \,\diff W^\alpha_t $, where $\kappa^\alpha = 5$, $\sigma^\alpha = 1$, $\alpha_0 = 0$, and $W^\alpha_t$ a Brownian motion. Finally, the trading speed of the uninformed trader follows the OU process $\diff \nuU_t = - \kappa^\nuU \,\nuU_t\,\diff t + \sigma^\nuU\,\diff W^\nuU_t$, where $\kappa^\nuU = 15$, $\sigma^\nuU = 100$,
$\nuU_0 = 0$, and $W^\nuU$ is a Brownian motion. The three Brownian motions $W^\nuU$, $W^S$, and $W^\alpha$ are independent, for simplicity.  {\new{The above parameters allow us to use the existence and uniqueness result we obtained at the level of the matrix Riccati differential equation  in Section \ref{sec: decay=0 and runnB=0}.}}
Next, we provide a closed-form representation of \eqref{eq: ell cond expectation}, given by \begin{equation}
    {\ell}_t = \begin{pmatrix}
				\g_t\\
				\h_t\\
                \f_t
			\end{pmatrix} = 
\begin{pmatrix}
    g^1(t)\,\alpha_t + g^2(t)\,\nuU_t\\
    h^1(t)\,\alpha_t + h^2(t)\,\nuU_t \\
    f^1(t)\,\alpha_t + f^2(t)\,\nuU_t \\
\end{pmatrix}\,,
\end{equation}
where $g^{1,2}$, $h^{1,2}$, and $f^{1,2}$ are deterministic functions characterised by the unique solution to the linear system of ODEs 
\begin{subequations}
    \begin{align}
0 &= \frac{\diff g^1_t}{\diff t} + \gB_t\,\left(g^1_t -h^1_t\right) + \gI_t\,h^1_t + g^1_t\,\gY_t\,\instantB - g^1_t\, \kappa^\alpha + \frac{1}{
 2\, \tempB} + \frac{h^1_t\,\instantB}{2\, \tempB}\,,\\
0 &= \frac{\diff g^2_t}{\diff t} - \gB_t + \gB_t\,\left(g^2_t  - h^2_t\right) + \gI_t\, h^2_t + g^2_t \,\gY_t\, \instantB - g^2_t\, \kappa^\nuU + \frac{\instantB}{2\, \tempB} + \frac{h^2_t\, \instantB}{2\, \tempB} \,,\\
0 &= \frac{\diff h^1_t}{\diff t}  + \hB_t\,\left(g^1_t - h^1_t\right) + h^1_t\, \hI_t + g^1_t \,\hY_t\, \instantB - h^1_t\, \kappa^\alpha + \frac{1}{
 2\, \tempI} + \frac{g^1_t \,\instantB}{2\,\tempI}\,,\\
0 &= \frac{\diff h^2_t}{\diff t} -\hB_t + \hB_t\left(g^2_t  - h^2_t\right)  + h^2_t \,\hI_t + g^2_t\, \hY_t \,\instantB - h^2_t\, \kappa^\nuU +\frac{ 
 g^2_t \,\instantB}{2\, \tempI}\,, \\
0 &= \frac{\diff f^1_t}{\diff t} + \fB_t\left( g^1_t - h^1_t\right) + \fI_t\, h^1_t + \fY_t g^1_t \,\instantB - f^1_t\, \kappa^\alpha\,, \\
0 &= \frac{\diff f^2_t}{\diff t} -\fB_t + \fB_t \left(g^2_t - h^2_t\right) + \fI_t\, h^2_t + \fY_t\, g^2_t \,\instantB - f^2_t \,\kappa^\nuU\,,
\end{align}%
\end{subequations}%
with terminal conditions $g^{1,2}_T = 0$, $h^{1,2}_T = 0$, and $f^{1,2}_T = 0$.
Given that $\decayB=0$, we require only $\gB,\gI,\hB,\hI:[0,T]\to\R$ as then, $\fB,\fI,\fY,\gY,\hY$ do not enter the optimal strategies of the broker and informed traders.

Figure \ref{fig: Sample paths} shows a sample path for the trading signal $\alpha$, the optimal trading speeds of the informed trader ($\nuIstar$) and the broker ($\nuBstar$) together with their inventories $\QI[I,\nuIstar]$ and $\QB[B,\nuBstar]$. Additionally, we include the $5\%$ and $95\%$ quantiles through time for these quantities.

\begin{figure}[H]
\centering
\includegraphics[width=0.32\textwidth]{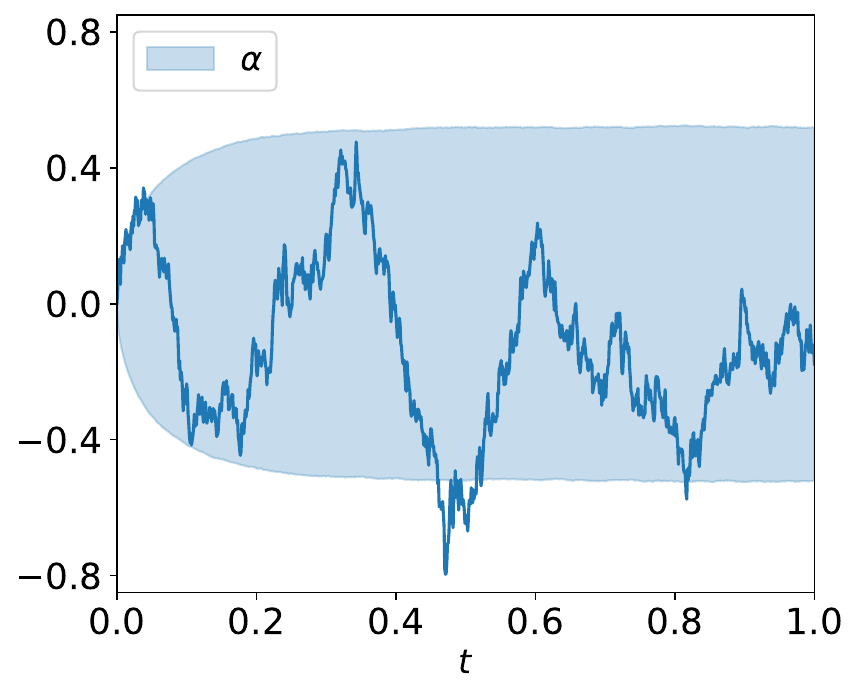} 
\hfill
\includegraphics[width=0.32\textwidth]{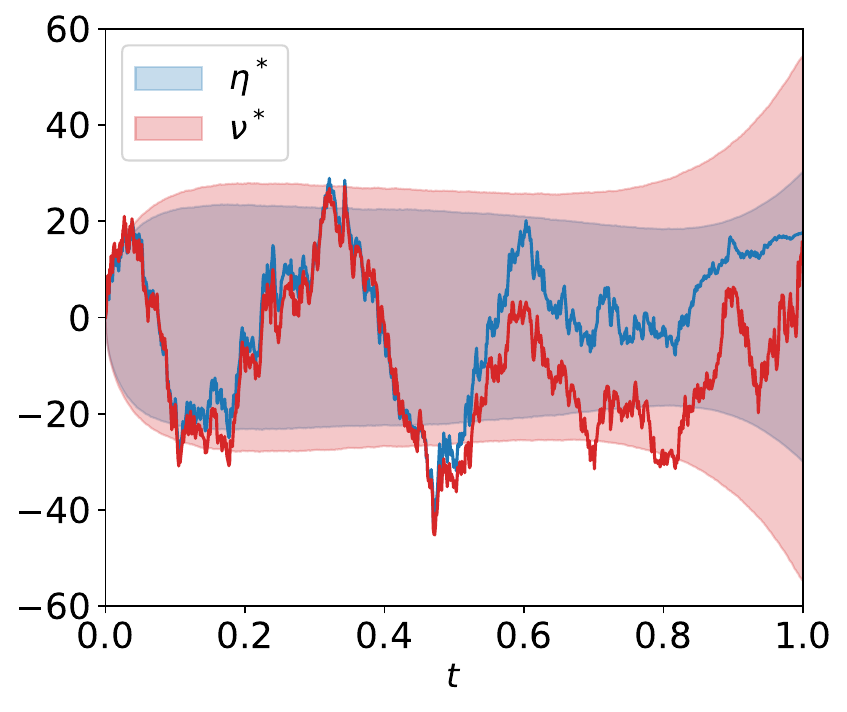}
\hfill
\includegraphics[width=0.32\textwidth]{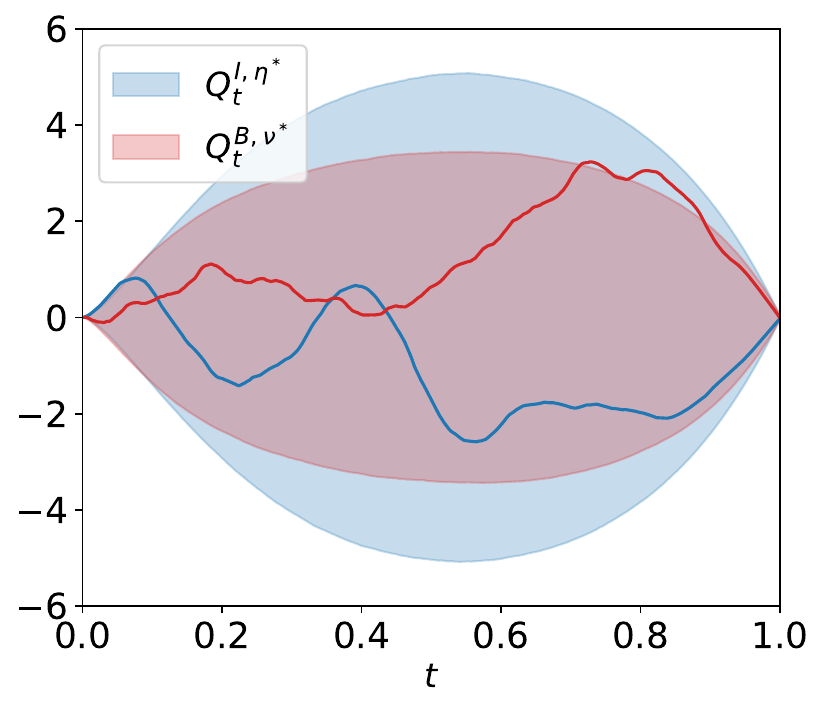}
\caption{Selected sample paths for the main processes that play a role in the Nash equilibrium.}
\label{fig: Sample paths}    
\end{figure}

{\new{Both players end the trading day without exposure on their inventories (due to the terminal penalties on inventories in the performance criteria of the players).Also, the path of the broker's inventory is rougher     than that of the informed trader's. This is due to the trading of the uninformed trader that only (directly) affects the inventory of the broker. Lastly, the trading speeds of the broker and the  informed trader 
align with the signal, which verifies the assumptions on the information sets of both players. In Figure \ref{fig: uninformed zero}, we set the uninformed trading speed to zero, i.e., $\sigma^\xi=0$. The figure confirms that the broker's inventory path becomes smoother, and the trading speeds of the informed trader and the broker align even further. Hence, the broker effectively offloads the inventory that the informed traders passes to them on to the market, however, the broker does keep a small position to benefit from the trading signal.}}
\begin{figure}[H]
\centering
\includegraphics[width=0.32\textwidth]{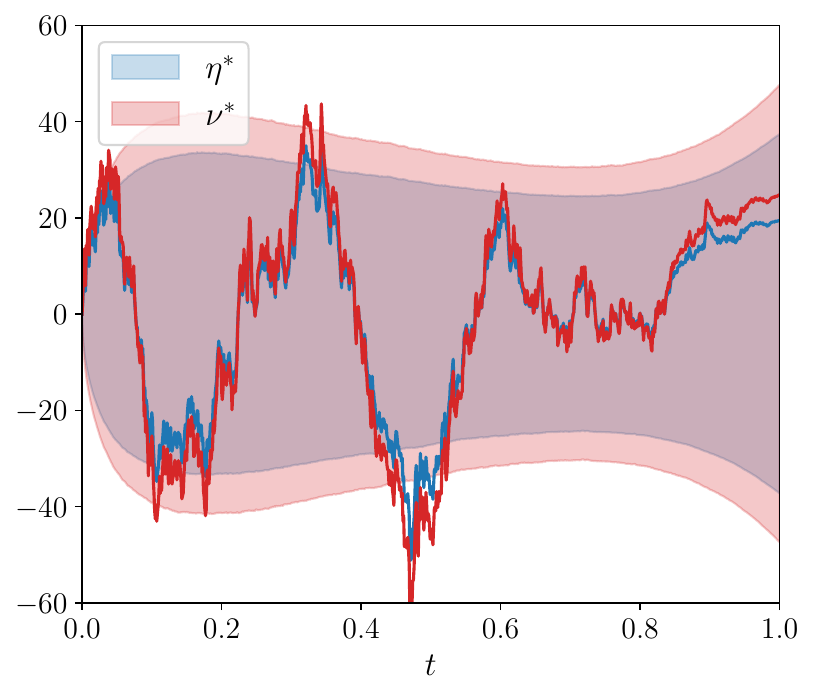}
\includegraphics[width=0.32\textwidth]{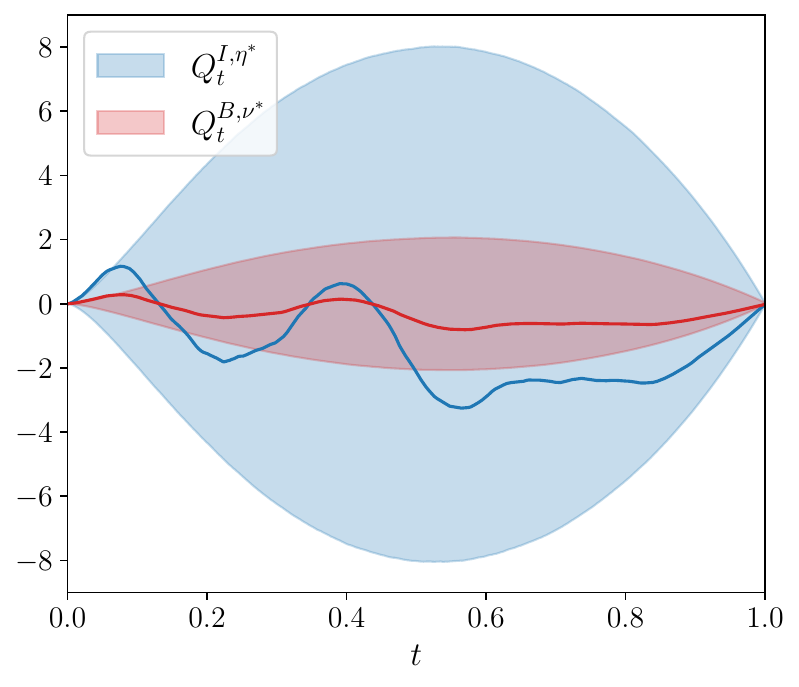}
\caption{Sample paths of the trading speeds and the inventories of the informed trader and the broker for when the uninformed trader does not trade.}
\label{fig: uninformed zero}    
\end{figure}

\subsection{Transient price impact}

{\new{
Next, we study the effect that the transient impact parameters $\decayB$ and $\instantB$ have in the Nash equilibrium.
More precisely, we stress the values of model parameters $\decayB$ and $\instantB$. The remaining values of model parameters are those reported at the beginning of Section \ref{sec: numerics}. 
We study the cases where $\decayB\in\{0,1,2,4\}$ and $\instantB\in\{0.001,0.1\}$, shown in Figure \ref{fig:DecayInstant}. Unfortunately, when $\decayB>0$, one cannot employ the existence and uniqueness results from Section \ref{sec: decay=0 and runnB=0}, and the results from Theorem \ref{thm: existence uniqueness} impose restrictive bounds on $T$ as we increase the value of $\decayB$. For example, when $\decayB=1,\instantB\in\{0.001,0.1\}$, the bound on $T$ is reasonably set at around 0.2, whereas when $\decayB=4,\instantB\in\{0.001,0.1\}$, one needs to take  $T<  7\times 10^{-5}$ so that existence and uniqueness follows from Theorem \ref{thm: existence uniqueness}. Although we do not have a theorem to guarantee existence and uniqueness for these parameters, none of the numerical approximations explode. In what follows we carry out the analysis with these values so that the effects of these parameters is apparent; we leave sharper bounds on the short time horizon for future work.
\begin{figure}[H]
    \centering
    \includegraphics[width=0.3\linewidth]{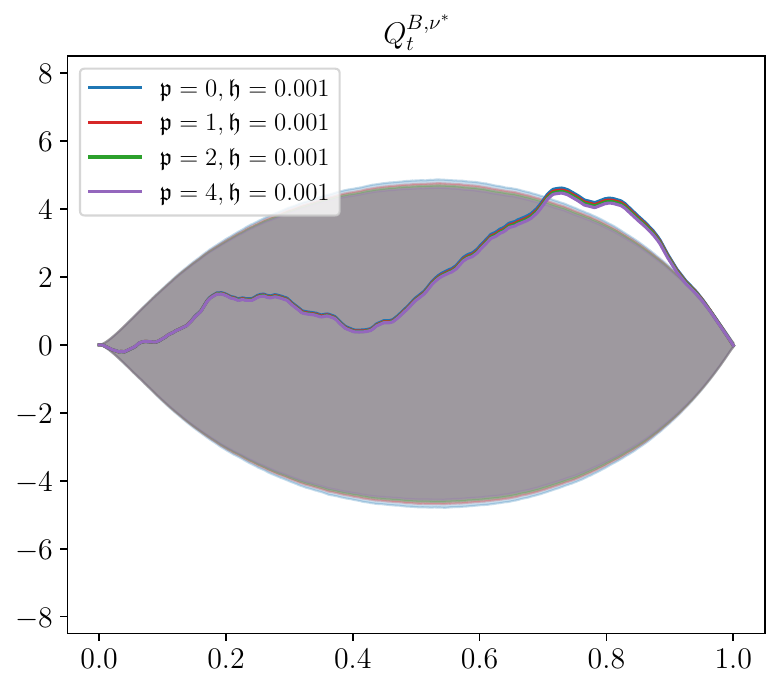}
    \includegraphics[width=0.3\linewidth]{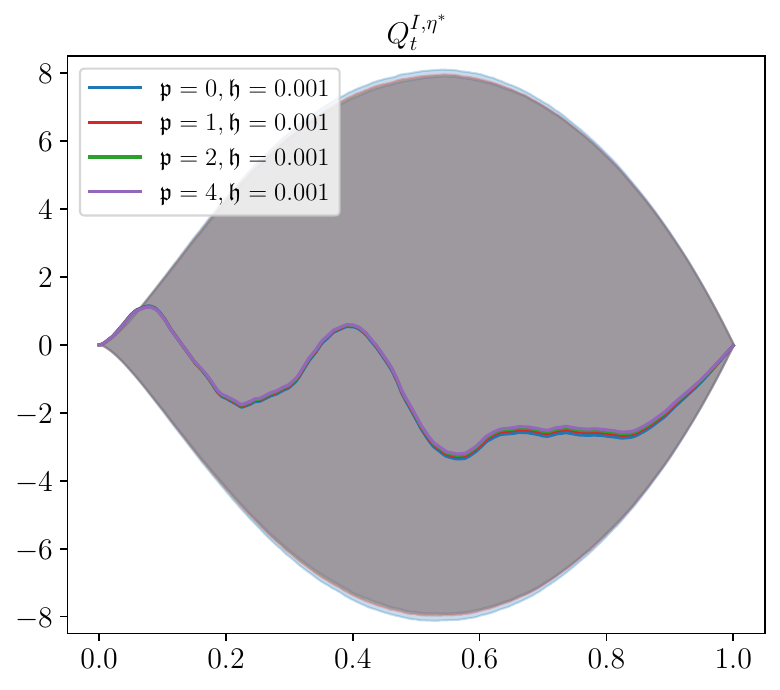}\\
    \includegraphics[width=0.3\linewidth]{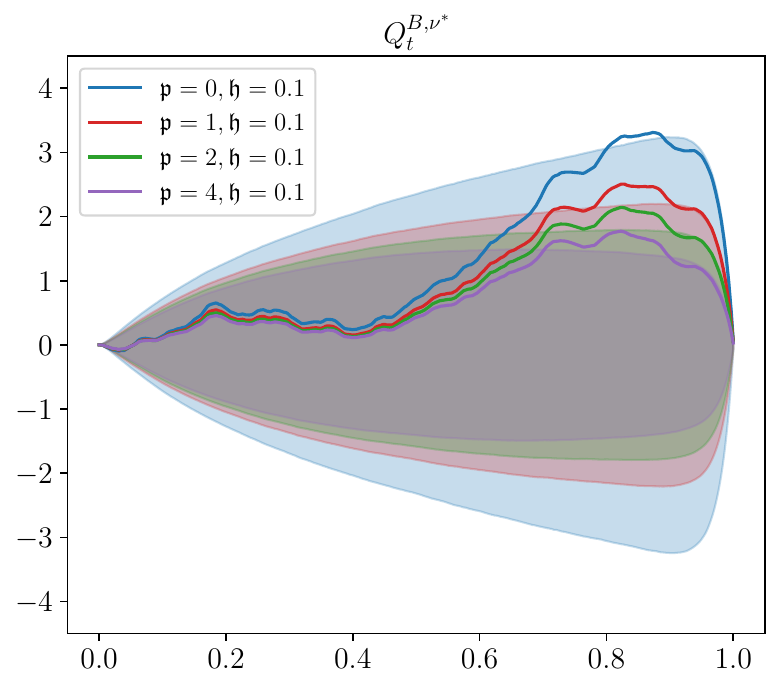}
    \includegraphics[width=0.3\linewidth]{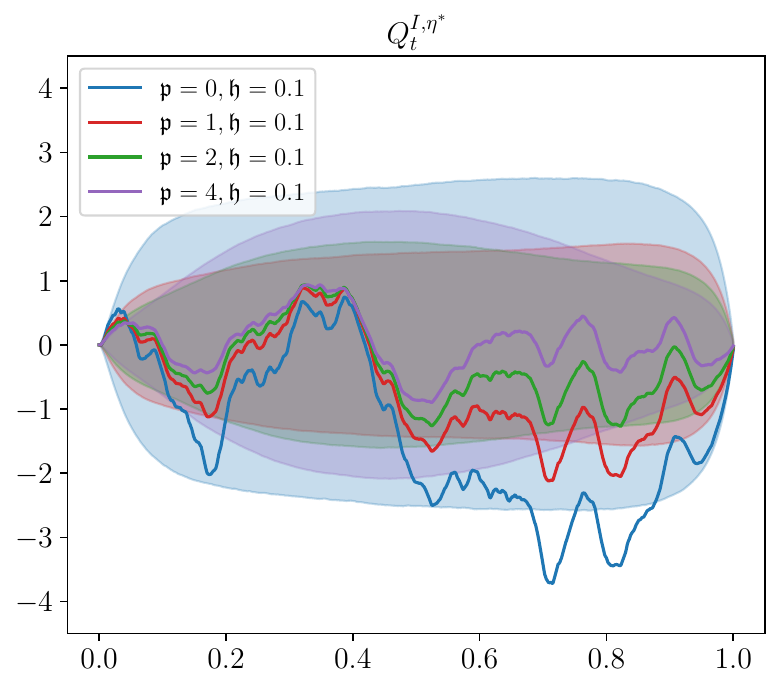}
    \caption{Trajectory of the inventories of the broker (left panel) and the informed trader (right panel). The top panels are for $\instantB=0.001$ and the bottom panels are for $\instantB=0.1$. The plots show the cases for $\decayB\in\{0,1,2,4\}$. }
    \label{fig:DecayInstant}
\end{figure}

The informed trader's strategy consists of two key drivers: direction of the signal and price impact. In equilibrium, there is competition between price discovery and the `riding' of price impact.  There are situations where these two drivers may go in opposite directions. For example, when the price impact is strong and persistent, and the broker's strategy is to trade in the direction of the market impact, the   informed trader may find it optimal to trade in the opposite direction of the signal and trade in tandem with the broker.

The bottom right panel of Figure \ref{fig:DecayInstant} shows a fundamental change  in the trajectories of the inventory of the informed trader depending on the value of the decay parameter $\decayB$. When the value of $\decayB$ is small (see e.g., $\decayB=0$) the informed trader deploys what looks like a ``pump-and-dump'' strategy. That is, the informed trader exaggerates the trading on the signal to profit from the price impact. On the other hand, when the value of $\decayB$ is large (see e.g., $\decayB=4$) the shape of the inventory trajectory suggests that the informed trader follows  the signal and does not ride the price impact. Thus, there is a range of values for which it is more beneficial to ride the price impact.

}}

\subsection{From incomplete information and model misspecification to perfect information}

Here, we compare the two-stage optimisation  in \cite{cartea2022broker}  with the Nash equilibrium we find. {\new{We employ the same model parameters as those above and set the ambiguity aversion parameters in their paper to $10^{-7}$ so the effect is negligible}}.\footnote{{\new{The results are similar for  smaller values of the ambiguity aversion.}}} We obtain 100,000 sample paths for the mark-to-market of the  broker and informed trader when following the strategies from the Nash equilibrium and those from the two-stage optimisation, all using the same set of Brownian paths. Figure \ref{fig: contours} shows the resulting kernel density estimator for all six pairwise quantities, where Nash-informed (Nash-broker) refers to the mark-to-market of the informed (broker) trader when following the Nash-equilibrium strategy, and $\Delta$ informed ($\Delta$ broker) refers to the the difference between the mark-to-market when following the Nash equilibrium and
the mark-to-market when following the two-stage optimisation strategy for the informed trader (broker).

\begin{figure}[H]
\begin{center}
\includegraphics[width=0.5\textwidth]{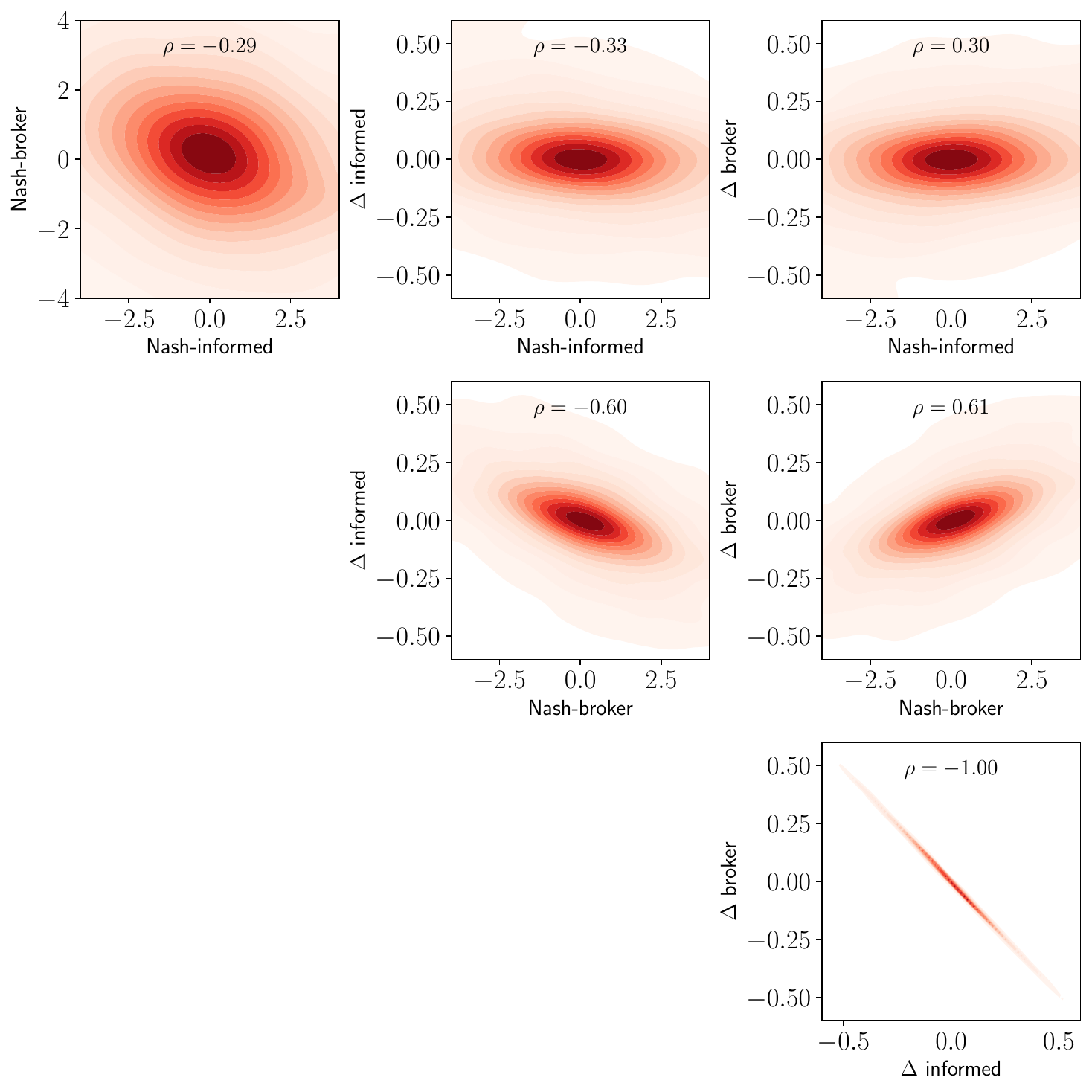}
\caption{Contour plots of the kernel density estimator for the Nash equilibrium found here versus the two-stage optimisation in \cite{cartea2022broker}. We report the correlation $\rho$ between the two variables.}
\label{fig: contours}
\end{center}
\end{figure}

{\new{The mean mark-to-market (MtM) of the informed trader increases by roughly 37 basis points (bps) when going from the two-stage optimisation in \cite{cartea2022broker} to the Nash equilibrium, whereas the broker decreases their mean MtM by 91 bps. A $t$-test shows, with 90\% confidence, that the average MtM of the samples generated by both the Nash and two-stage optimisation are statistically different (for both the informed trader and the broker).\footnote{{\new{The $p$-values are 0.07 and 0.03 for the informed and the broker tests respectively. }}}  
This difference provides an upper bound for how profitability of the players' strategies change when information becomes public. 
Thus, in our model (for the parameters above), the maximum improvement that the informed trader can obtain is 37 bps  which is largely paid by the broker; note that this is four orders of magnitude higher than transaction costs in foreign exchange markets.}}\footnote{Transaction costs in spot foreign exchange are roughly \$3 per \$1 million traded; see \cite{cartea2021optimal}.} {\new{This result states that when going from the strategies in  \cite{cartea2022broker} to the perfect information Nash equilibrium,  the informed trader gains more from knowing about the activity of the broker (and their impact in prices) than what the broker gains from knowing the signal. Indeed, in  \cite{cartea2022broker} the broker benefits from knowing the two groups of traders (informed and uninformed), and the optimal strategy externalises a fair deal of the informed trader's order flow. The added externalisation that the broker carries out when knowing the signal is less beneficial than what the informed trader obtains from knowing the information set at the disposal of the broker, in particular, her impact on prices, and her inventory (which largely determines the future offloading that will take place in the market). This provides insights into which player has more incentives to hide their order flow.

Our results also provide insights into what happens in maker-taker relationships in  markets where the market maker does not benefit from anonymity.
Such perfect information trading environments are a reality in the decentralised finance space --- in particular in  automated market maker trading venues. More precisely, in these trading venues, where billions of dollars are traded every day,\footnote{On 26 November 2024, the volume traded in Uniswap V3 was slightly over \$4 billion USD.\\ See \url{https://defillama.com/dexs/uniswap-v3} for the latest figures.} the trading activity of both liquidity providers and liquidity takers, is visible to everyone. Within our setup, the broker would be the liquidity providers in the liquidity pool and the external venue could be Binance. 

}}

\section{Acknowledgements}
SJ would like to thank the National Sciences and Engineering Research council of Canada for partially supporting this research through grants RGPIN-2018-05705 and RGPIN-2024-04317.

\bibliographystyle{apalike}
\bibliography{References}

\end{document}